\documentclass{article}
\usepackage[letterpaper, margin=1.0in]{geometry}
\usepackage{graphicx} 
\usepackage{amsmath, amssymb, amsthm}
\usepackage{hyperref}
\usepackage{algorithm}
\usepackage{algpseudocode}
\usepackage{caption}
\usepackage{subcaption}
\usepackage{xcolor}
\usepackage{natbib}
\usepackage{tikz}
\title{  1.64-Approximation for Chromatic Correlation Clustering via Chromatic Cluster LP}
\newtheorem{theorem}{Theorem}
\newtheorem{lemma}{Lemma}
\newtheorem{definition}{Definition}
\newtheorem{proposition}{Proposition}

\newtheorem{claim}{Claim}
\DeclareMathOperator{\obj}{obj}
\DeclareMathOperator{\opt}{opt}

\DeclareMathOperator{\err}{err}

\newcommand{\acks}{\section*{Acknowledgments}}
\begin{document}
\date{}
\author{
  Dahoon Lee\thanks{Department of Mathematical Sciences, Seoul National University, Seoul, 08826, South Korea. \texttt{dahoon46@snu.ac.kr}} \and
  Chenglin Fan\thanks{ Department of Computer Science and Engineering, Seoul National University, Seoul, 08826, South Korea. \texttt{fanchenglin@snu.ac.kr}  } \and
  Euiwoong Lee\thanks{Computer Science and Engineering Division, University of Michigan, Ann Arbor, MI 48109, USA. \texttt{euiwoong@umich.edu}}
}
\maketitle

\begin{abstract}

Chromatic Correlation Clustering (CCC)
generalizes Correlation Clustering by assigning multiple categorical relationships (colors) to edges and imposing chromatic constraints on the clusters. Unlike traditional Correlation Clustering, which only deals with binary $(+/-)$ relationships, CCC captures richer relational structures.
Despite its importance, improving the approximation for CCC has been difficult due to the limitations of standard LP relaxations.

We present a randomized $1.64$-approximation algorithm to the CCC problem, significantly improving the previous factor of $2.15$.
Our approach extends the cluster LP framework
to the chromatic setting by introducing a chromatic cluster LP relaxation and an rounding algorithm that utilizes both a cluster-based and a greedy pivot-based strategy. The analysis bypasses the integrality gap of $2$ for the CCC version of standard LP and highlights the potential of the cluster LP framework to address other variants of clustering problems.

\end{abstract}

\section{Introduction}

\textbf{Correlation Clustering (CC)} is a foundational unsupervised learning problem that partitions a signed graph while minimizing disagreements with the provided edge labels. Given an undirected graph $G = (V, E = E^+ \uplus E^-)$, where edges are labeled as either similar (`+') or dissimilar (`$-$'), the goal is to find a node partition (namely, a clustering) $\mathcal{C} = \{C_1, \ldots, C_k\}\:(\biguplus_{i=1}^k C_i=V)$ that minimizes the total number of disagreements:

\begin{equation}
\text{cost}(\mathcal{C}) =  \underbrace{|\{uv \in E^+ : u \in C_i, v \in C_j, i \neq j\}|}_{\text{Positive edges between clusters}}+\underbrace{|\{uv \in E^- : \exists C_i, u,v \in C_i \}|}_{\text{Negative edges within a cluster}}.
\label{eq:disagreement}
\end{equation}

The first term counts the number of similar pairs that are incorrectly separated, while the second counts dissimilar pairs that are incorrectly grouped together~\citep{bansal2004correlation}.
\\

To find approximate solutions, CC is often formulated as a linear program (LP). The standard \textbf{edge-based LP relaxation} uses fractional variables $x_{uv} \in [0,1]$ to represent the probability that nodes $u$ and $v$ are separated. This formulation enforces triangle inequality constraints ($x_{uw} \leq x_{uv} + x_{vw}$) to ensure the separation variables induce a valid pseudometric, while minimizing the objective:
\begin{align}
\text{minimize}\quad&\sum_{(u,v) \in E^+} x_{uv} + \sum_{(u,v) \in E^-} (1 - x_{uv})\tag{LP}\end{align}
\begin{align}
\text{subject to}\quad &x_{uv} + x_{vw} \geq x_{wu}& \forall u,v,w\in V \\
    &x_{uv} \in [0, 1]& \forall uv\in\binom{V}{2}
\end{align}

A stronger formulation is the \textbf{cluster LP} relaxation~\citep{Cao2024}, which assigns every subset $S$ of $V$ to variable $z_S$.
Since the number of variables in the LP is exponential to $|V|$, \citep{Cao2024} construct the near-optimal solution in polynomial time.
\\

While standard CC excels at capturing binary relationships, many real-world applications involve richer, multi-categorical data. For instance, social networks encode diverse interaction types (e.g., professional, familial), and biological networks annotate proteins with distinct functional labels. In such settings, clusters must satisfy both structural coherence and categorical homogeneity. This motivates \textbf{Chromatic Correlation Clustering (CCC)}~\citep{bonchi2012chromatic}, a powerful generalization of CC.

CCC operates on edge-colored graphs, where each color represents a distinct relationship type. The problem imposes two key requirements on the final clustering:
\begin{itemize}
    \item \textbf{Chromatic Clusters}: Each cluster is now assigned a color (or label) as well.
    \item \textbf{Color-Consistent Edges}: Edges \emph{within} a cluster should ideally match the cluster's assigned color, while edges \emph{between} different clusters should ideally be negative ($=\gamma$) color, indicating dissimilarity.
\end{itemize}

The objective in CCC is to find a partition that minimizes the number of disagreements—edges that violate these conditions.
Formally, given the color assignment $\phi:\binom{V}{2}\to L\cup \{\gamma\}\:(|L|<\infty)$,
the goal is to find a node partition $\mathcal{C}$ along with the color assignment $\Phi:\mathcal{C}\to L$ that minimizes the following:
\begin{align}
\text{cost}((\mathcal{C},\Phi)) &=  \underbrace{|\{uv : \phi(uv)\in L,(u \in C_i, v \in C_j, i \neq j\text{ or }\exists C_i, u,v \in C_i,\Phi(C_i)\neq \phi(uv)|}_{\text{Positive edges with violated assignment}}\nonumber\\
&\quad+\underbrace{|\{uv : \phi(uv)=\gamma,\exists C_i, u,v \in C_i \}|}_{\text{Negative edges within any cluster}}.
\label{eq:disagreement_CCC}
\end{align}
This makes CCC NP-hard yet highly valuable for applications such as detecting communities with uniform relationship types in network analysis~\citep{bonchi2012chromatic}, merging records in databases under categorical constraints~\citep{anava2015chromatic}, and clustering proteins with shared functional annotations~\citep{klodt2021color}.


\subsection{Our Contributions}
We present a significant improvement over previous theoretical works on CCC by extending the cluster LP framework to the chromatic setting.
\begin{theorem}[Main result]
\label{thm:final}
    For any $\varepsilon>0$, there exists a $(\frac{18}{11}+\varepsilon)$-approximation algorithm for CCC with time complexity $n^{\textnormal{poly}(1/\varepsilon)}\cdot \textnormal{poly}(L)$.
\end{theorem}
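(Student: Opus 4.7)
The plan is to mirror the cluster LP approach of \citep{Cao2024} for standard correlation clustering, lifted to the chromatic setting. First I would introduce a \emph{chromatic cluster LP} with one variable $z_{S,\ell}$ for every subset $S\subseteq V$ and color $\ell \in L$, interpreted as the indicator that $S$ is a cluster assigned color $\ell$. The partitioning constraints $\sum_{S\ni v,\ell} z_{S,\ell}=1$ for every $v\in V$ make this a proper relaxation of CCC, and the objective charges each edge $uv$ for the appropriate disagreement: a positive edge with $\phi(uv)=\ell$ pays whenever $u,v$ end up in different clusters or in the same cluster whose assigned color differs from $\ell$, while a negative edge pays whenever $u,v$ are co-clustered. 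Since this LP has exponentially many variables, I would argue as in \citep{Cao2024} that a $(1+\varepsilon)$-approximate feasible solution can be computed in time $n^{\textnormal{poly}(1/\varepsilon)}\cdot \textnormal{poly}(|L|)$ via a Sherali--Adams-type lift, with the chromatic coordinate adding only a multiplicative $\textnormal{poly}(|L|)$ factor.

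Next I would design the randomized rounding. Given the fractional solution, view each vertex $v$ as a distribution over colored clusters $(S,\ell)$ with $v\in S$. The algorithm runs one of two strategies, chosen with a tunable probability $p\in[0,1]$: (i) \textbf{cluster-based rounding}, which picks a pivot $v$ uniformly at random, samples a colored cluster $(S,\ell)$ from the distribution at $v$, and outputs $(S,\ell)$ as one block; or (ii) \textbf{greedy pivot-based rounding}, a CCC adaptation of the classical pivot scheme that selects a pivot $v$, consults the LP marginals to decide which neighbors join $v$'s cluster, and picks the assigned color greedily to match the locally dominant label. In both cases the formed block is removed and the procedure recurses until all vertices are clustered.

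The core of the proof is a tight per-edge analysis: for every edge $uv$ I would bound the expected disagreement cost in terms of the LP's own contribution from $uv$ and show the ratio is at most $\frac{18}{11}+\varepsilon$. The analysis splits into cases based on whether $\phi(uv)\in L$ or $\phi(uv)=\gamma$ and on the LP mass co-clustering $u$ and $v$ under various color choices. Cluster-based rounding dominates when the endpoints' LP supports concentrate on a single colored cluster, while pivot-based rounding dominates when that support is highly fractional. Choosing $p$ to balance the two regimes and summing over edges yields the $\frac{18}{11}$ factor, with the $\varepsilon$ slack absorbing the approximate LP solve.

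The main obstacle will be the case analysis that certifies $18/11$. Unlike uncolored CC, the LP distributions at $u$ and $v$ may disagree on both cluster membership and color, so the worst-case edge contribution is governed by a two-parameter optimization rather than the one-parameter optimization in \citep{Cao2024}. I expect the tightest sub-case is a positive edge whose endpoints share cluster support but split their color mass across several labels; there the pivot strategy's greedy color choice must be analyzed carefully, and $p$ must be tuned so that losses in this sub-case are compensated by gains in the edges handled well by cluster-based rounding. A secondary hurdle is showing that the approximate LP solver remains $(1+\varepsilon)$-accurate despite the extra color coordinate, which should follow from a direct extension of the Sherali--Adams argument in \citep{Cao2024}.
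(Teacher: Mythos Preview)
Your high-level architecture---chromatic cluster LP, $(1+\varepsilon)$-approximate solve, and a randomized mix of cluster-based and pivot-based rounding---matches the paper. But two concrete ingredients are missing from the plan, and without them the analysis will not close at $18/11$.

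First, your pivot-based rounding ``picks the assigned color greedily to match the locally dominant label'' after forming a cluster. The paper instead uses \emph{color pre-categorization} \citep{XiuHTCH22}: before any pivoting, each vertex $u$ is assigned to the unique color class $V_c$ with $x_u^c < 1/2$ (if any), and the pivot procedure then runs \emph{separately inside each $V_c$} with rounding functions $f^+=0$, $f^-=f^\circ=1$ (pure greedy on edge signs relative to $c$). Edges whose endpoints lie in different color classes are handled by a direct per-edge argument and never enter the pivot analysis. Your global pivot with a posteriori color assignment would force you to reason about clusters containing vertices whose LP mass concentrates on different colors, and there is no clean charging scheme for that case---which is precisely the case you flag as hardest. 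Pre-categorization sidesteps it entirely (at the price of a hard lower bound of $1.5$ on the achievable ratio, which the paper notes).

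Second, a per-edge analysis of the pivot step will not work; you need a \emph{triangle} analysis within each $V_c$, and the key device is a non-linear budget $b(uv)$ (equation~\eqref{eq:ccc_b}) chosen so that $\frac{\alpha}{2}\cdot(\text{cluster-based cost on }uv) + (1-\frac{\alpha}{2})\cdot b(uv) \le \alpha\cdot(\text{LP cost on }uv)$ holds edge-by-edge, while simultaneously $ALG(uvw)\le LP(uvw)$ holds triangle-by-triangle for the pivot step with this $b$. The tight configurations are $(+,+,-)$ and $(+,+,\circ)$ triangles with small $x$, not a single positive edge with split color mass. Relatedly, your intuition about which rounding dominates where is inverted: the cluster-based algorithm has ratio $\frac{2}{1+x}$ on a $+$ edge, which is \emph{worst} (near $2$) when the LP is concentrated ($x\to 0$) and best when fractional; the greedy pivot rescues exactly the concentrated case. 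Getting this balance right is what pins down $\alpha=18/11$.
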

This improves upon the previous best-known approximation factor of 2.15~\citep{Lee2025} for this problem.

Our contributions on top of the main result are the following:
\begin{itemize}

    \item \textbf{Flexibility on the Extension of the Cluster LP Framework:} We extend the powerful cluster LP framework by \citep{Cao2024} to the chromatic setting, maintaining intuition for the framework. This demonstrates the robustness and generality of the cluster LP approach.

    \item \textbf{A Strengthened Standard LP Relaxation for CCC:} 
    Derived from the extension of cluster LP to the chromatic setting, we also provide that the standard LP for CCC can be further strengthened in a natural way (see Condition~\eqref{cond:CCC_metric_strong} compared to Condition~\eqref{cond:CCC_metric}).
    \item \textbf{Supplementary Intuitions on Cluster LP Framework:} We provide more intuitions on some parts of the cluster LP framework, particularly on the novel constraint (Constraint~\eqref{cond:LP_strong}) of the revised version of cluster LP and the policy of choosing between cluster-based and pivot-based algorithms from the given data.


\end{itemize}

\subsection{Overview of Techniques}
\paragraph{Extension of Cluster LP.}
In Section~\ref{sec:LP}, we will extend the following \eqref{eq:CLP} to the chromatic setting.
\begin{align}
\text{minimize}\quad&\sum_{uv\in E^+} x_{uv} + \sum_{uv\in E^-}(1-x_{uv})\label{eq:CLP}\tag{cluster LP}\end{align}
\begin{align}
\text{subject to}\quad
&\sum_{S\ni u}z_S = 1& \forall u\in V\label{cond:LP_v}\\
&\sum_{S\ni uv}z_S = 1-x_{uv}& \forall uv\in \binom{V}{2}\label{cond:LP_e}\\
&\sum_{|S\cap uvw|\geq 2}z_S\leq 1& \forall uvw\in \binom{V}{3}\label{cond:LP_strong}\\
&z_S\geq 0& \forall S\subseteq V,\,S\neq \emptyset\label{cond:LP_bd}
\end{align}

Here, $z_S\in[0,1]$  indicates the number of cluster $S$ within the clustering.
Therefore, constraint \eqref{cond:LP_v} ensures that each node is contained in some unique cluster, while constraint \eqref{cond:LP_e} ensures that the separation variables $x_{uv}$ are consistent with the probability of nodes $u$ and $v$ belong to the same cluster.

Constraint~\eqref{cond:LP_strong} is an additional, yet crucial constraint for the cluster LP originally suggested, required to bring the approximation factor for the CC problem down to $1.437+\varepsilon$ from previous papers~\citep{cohenaddad2023preclustering,Cao2024,Cao2025}.
Intuitively, this constraint indicates that there are at most 1 cluster containing at least 2 vertices among $u,\,v,\,$and $w$.

These ideas can be well preserved in the extension to the chromatic setting by presenting the decomposition of $z_S$ into variables $\{z_S^c\}_{c\in L}$ (see Section~\ref{sec:LP}).

\paragraph{Extension of Near-Optimal Solution Construction.}
In Section~\ref{sec:solution}, we will also make a generalization of the whole procedure suggested by \citep{Cao2024} of constructing the near-optimal solution for the cluster LP towards the chromatic setting.
The following describes the high-level idea on each of the procedures with our extension conducted:
\begin{enumerate}
    \item \textbf{Preclustering -- Section~\ref{subsec:preclust}:}
    Starting from clustering such that $\obj=O(1)\cdot \opt$, the preclustering step identifies \emph{locally near-optimal} (i.e., mostly $+$ internally and mostly $-$ externally) disjoint non-singleton subsets (denoted as \emph{atoms}) and which pairs of atom (or singleton) are strongly far apart (denoted as non-\emph{admissible edges})~\citep{Cao2025}.

    In the chromatic setting, each atom and admissible edge now have a color: The color of atom should be unique, while multiple edges with different colors between two vertices are possible.
    Since every atom is completely inside some cluster with the same color of optimal clustering and the target near-optimal clustering due to Lemma~\ref{lem:atom}, admissible edges incident on some atom are only allowed to have the same color as the atom.
    \item \textbf{Bounded Sub-Cluster LP Relaxation -- Section~\ref{subsec:bsclp}:}
    Based on the information about atoms and admissible edges from the previous step, \emph{bounded sub-cluster LP} is then formulated to find the $(1+\varepsilon)$-approximate clustering.
    The bounded sub-cluster LP consists of variables $y_S^s$, indicating the number of clusters with size $s$ containing $S$.
    By restricting the upper bound of $s$ as the polynomial of $1/\varepsilon$, the number of variables in the LP is then polynomial by $n$, thus solving the LP takes $n^{\text{poly}(1/\varepsilon)}$ time.

    In the chromatic setting, the natural extension for the variable is then $y_S^{s,c}$, indicating the number of clusters with size $s$ of color $c$ containing $S$.
    Since the number of variable is then multiplied by $L$, solving the LP takes $n^{\text{poly}(1/\varepsilon)}\cdot \text{poly}(L)$ time.
    \item \textbf{Conversion to the Solution for Cluster LP -- Section~\ref{subsec:sampling} and \ref{subsec:clp}:}
    The optimal solution for the bounded sub-cluster LP is then converted to the solution for the original cluster LP, based on the fact that the cluster inside the near-optimal solution can be effectively sampled from $\{y_S^s\}$ using the Raghavendra-Tan rounding technique~\citep{RT12}.
    Following the sampling of $(s,u)$ with a probability proportional to $y_u^s$, the technique is applied to $\{y_{Su}^s\}$ normalized by $y_u^s$.

    In the chromatic setting, the $c$ from $y_S^{s,c}$ is yet another categorical property just as $s$; therefore, the process can be generalized by sampling $(s,c,u)$ with a probability proportional to $y_u^{s,c}$, followed by the utilization of the Raghavendra-Tan rounding technique on $\{y_{Su}^{s,c}\}$ normalized by $y_u^{s,c}$.
\end{enumerate}

\paragraph{Extension of Clustering Algorithm with Triangle Analysis.}
In Section~\ref{sec:algorithm},
we introduce a $\frac{18}{11}$-approximation algorithm for CCC from the solution of the chromatic cluster LP, which would be constructed from the previous step.
The algorithm is based on the idea suggested by \citep{Cao2024}, which randomly selects the execution between the cluster-based algorithm and the pivot-based algorithm.

The original cluster-based algorithm samples the cluster $S$ with a probability proportional to $z_S$, which can easily be generalized to the chromatic setting by sampling $(S,c)$ with a probability proportional to $z_S^c$.

The pivot-based algorithm used in the algorithm of \citep{Cao2024} not only uses the specific deterministic rounding function but also exploits more information from cluster LP solution by sampling the cluster during the pivoted subroutine.
In this work, we will instead use a purely pivot-based algorithm for CCC with the \emph{color pre-categorization}~\citep{XiuHTCH22} and \emph{rounding function}~\citep{chawla2015near,Lee2025} technique.
Although the color pre-categorization technique allows for simpler analysis of the algorithm, this has the limitation that the approximation factor cannot be less than $1.5$, which would be discussed in Section~\ref{sec:triangle}.

We choose rounding functions for the pivot-based algorithm that correspond to the greedy selection on the pivot edges, which then yields the promising $\frac{18}{11}$-approximation in total.
This can be proven by the usual triangle analysis for CCC, with the budget non-linearly depending on the fractional edge separation value $x$ of the LP.
This is because the performance of the cluster-based algorithm depends on the value of $x$, which is between $[1,2]$~\citep{Cao2024}.
The proof of such an approximation factor is given in Section~\ref{sec:triangle}.

Based on this result, we would provide an intuition in Section~\ref{sec:algorithm} about when we would prefer to choose the pivot-based algorithm over the cluster-based algorithm, depending on the given input $\phi$.
Also, since the rounding functions were naively chosen in the pivot-base algorithm, we believe that an even better approximation factor can be achieved by fine-tuning the rounding functions, or even further by exploiting more information from the cluster LP solution as in the pivot-based algorithm by \citep{Cao2024}.

\subsection{Related Work}
The \emph{Correlation Clustering} (CC) problem, introduced in~\citep{ben-dor1999clustering}, is APX-hard with progressive approximation improvements. Early work established constant-factor approximations~\citep{bansal2004correlation}, later refined to 4-approximation~\citep{CHARIKAR2005360} and 3-approximation~\citep{ailon2008aggregating}. Subsequent LP-rounding techniques broke the 2-approximation barrier~\citep{chawla2015near,cohenaddad2022sherali}, with recent advances using Sherali-Adams hierarchies~\citep{cohenaddad2022sherali} and preclustering~\citep{cohenaddad2023preclustering} achieving $(1.73+\varepsilon)$. The current best $(1.437+\varepsilon)$-approximation comes from Cao et al.'s \emph{cluster LP} framework~\citep{Cao2024}, which unifies previous relaxations and admits efficient solutions.
Recent work by \citep{Cao2025} presents a breakthrough method for solving the Correlation Clustering LP in sublinear time.

Multi-relational network analysis motivated \emph{Chromatic Correlation Clustering} (CCC), which extends traditional correlation clustering via colored edges. While early work \citep{bonchi2012chromatic} developed heuristics, theoretical foundations emerged with \citep{anava2015chromatic}'s 4-approximation LP method. \citep{klodt2021color} later connected CCC to classical results, showing Pivot maintains a 3-approximation.
Recent advances in approximation algorithms for CCC have achieved new theoretical milestones. \citep{XiuHTCH22} developed an LP-based method that improved the approximation factor to 2.5. \citep{Lee2025} advanced the state-of-the-art by developing a 2.15-approximation algorithm for Chromatic Correlation Clustering through an innovative combination of standard LP relaxation with problem-specific rounding techniques. Their work also established an inherent limitation of this approach, proving that no approximation ratio better than 2.11 can be achieved within the same framework. In contemporary data analysis, the application of correlation clustering frequently encounters practical limitations, such as restricted memory or the need to process data as it arrives in a continuous stream. This growing demand for efficiency has catalyzed extensive research into the development of specialized correlation clustering algorithms designed for dynamic, streaming, online, and distributed environments~\citep{lattanzi2017consistent,  jaghargh2019, cohen2019fully, lattanzi2021parallel,guo2021distributed,fichtenberger2021, cohen2022online, assadi2022sublinear, behnezhad2022, behnezhad2023, bateni2023,cohen2024dynamic, abs-2402-15668, braverman2025fully,abs-2504-12060}.


\section{Overview of the Framework}
\subsection{Chromatic Cluster LP}
\label{sec:LP}
The standard LP relaxation for CCC suggested by \citep{bonchi2012chromatic} is as follows:
\begin{align}
    \text{minimize}\quad& \sum_{\phi(uv)\neq \gamma} x_{uv}^{\phi(uv)} + \sum_{\phi(uv) = \gamma} \sum_{c \in L} (1 - x_{uv}^c) \label{eq:LP_CCC} \tag{CCC-LP} \end{align}
\begin{align}
    \text{subject to}\quad 
    &x_{uv}^c \geq x_u^c,\, x_v^c&\forall uv\in \binom{V}{2},\,c\in L \label{cond:CCC_evdom} \\
    &x_{uv}^c + x_{vw}^c \geq x_{wu}^c&\forall u,v,w\in V,\,c\in L \label{cond:CCC_metric} \\
    &\sum_{c \in L} x_u^c = |L| - 1&\forall u\in V \label{cond:CCC_chroma} \\
    &x_u^c,\,x_v^c,\, x_{uv}^c \in [0,1]&\forall uv\in \binom{V}{2},\,c\in L \label{cond:CCC_bd}
\end{align}
Here, each $1-x_u^c$ and $1-x_{uv}^c$ represents the fractional assignment of the vertex $u$ and edge $uv$ to a cluster of color $c$, respectively.

Applying the idea of cluster LP, we can formulate the chromatic version of the cluster LP as \eqref{eq:CLP-CCC}.
We generalize the primary version of \eqref{eq:CLP}, which does not include constraint~\eqref{cond:LP_strong}.
\begin{align}
\text{minimize}\quad&\sum_{\phi(uv)\in L} x_{uv}^{\phi(uv)} + \sum_{\phi(uv)=\gamma}\sum_{c\in L} (1-x_{uv}^c)\label{eq:CLP-CCC}\tag{chromatic cluster LP}\end{align}
\begin{align}
\text{subject to}\quad
&\sum_{S\ni u}z_S^c = 1-x_u^c&\forall u\in V,\, c\in L\label{cond:CLP_vc}\\
&\sum_{c\in L}(1-x_u^c)=1&\forall u\in V\label{cond:CLP_v}\\
&\sum_{S\ni uv}z_S^c = 1-x_{uv}^c&\forall uv\in \binom{V}{2},\,c\in L\label{cond:CLP_ec}\\
&z_S^c\geq 0& \forall S\subseteq V,\,S\neq \emptyset,\,c\in L\label{cond:CLP_bd}
\end{align}
Here, $z_S^c$ indicates the fractional count of cluster $S$ of color $c$ within the clustering.
Therefore, constraint~\eqref{cond:CLP_v} ensures that each node is contained in some unique cluster,
while constraint~\eqref{cond:CLP_vc} and \eqref{cond:CLP_ec} ensures that $x_u^c$ and $x_{uv}^c$ are consistent with the probability that node $u$ belongs to the cluster of color $c$, or nodes $u$ and $v$ belong to the same cluster of color $c$, respectively.

We can also check that the solution from the cluster LP induces the feasible solution for the standard LP: \eqref{cond:CCC_evdom} can be deduced from \eqref{cond:CLP_vc} and \eqref{cond:CLP_ec}, 
\eqref{cond:CCC_chroma} can be deduced from \eqref{cond:CLP_v},
and \eqref{cond:CCC_bd} can be deduced from all 4 inequality constraints --- \eqref{cond:CLP_vc} to \eqref{cond:CLP_bd} --- in \eqref{eq:CLP-CCC}.

When it comes to the condition~\eqref{cond:CCC_metric}, even stronger constraint can be deduced, which is the following:
\begin{equation}
\label{cond:CCC_metric_strong}
x_{uv}^c + x_{vw}^c \geq x_{wu}^c+x_v^c\quad \forall uvw\in \binom{V}{3},\,c\in L.
\end{equation}
In fact, this constraint is natural regarding the interpretation of variables $x_{uv}^c$ as \emph{bottleneck distances}~\citep{Lee2025}:
Since $x_{uv}^c$ is a bottleneck distance $d(u,v)$ from $u$ to $v$ conditioned by the use of edges $uu_c$ and $vv_c$,
\begin{align*}
    x_{uv}^c + x_{vw}^c&= \max\{d(u,u_c),d(u_c,v_c),d(v_c,v)\}+\max\{d(v,v_c),d(v_c,w_c),d(w_c,w)\}\\
    &=\max\{\max\{d(u,u_c),d(u_c,v_c)\},d(v_c,v)\}\\
    &\quad+\max\{d(v,v_c),\max\{d(v_c,w_c),d(w_c,w)\}\}\\
    &\geq d(v_c,v)+\max\{\max\{d(u,u_c),d(u_c,v_c)\},\max\{d(v_c,w_c),d(w_c,w)\}\}\\
    &\geq d(v_c,v)+\max\{d(u,u_c),d(u_c,w_c),d(w_c,w)\}\\
    &=x_v^c+x_{uw}^c,
\end{align*}
where the first inequality comes from the fact that $\max\{a,c\}+\max\{b,c\}\geq c+\max\{a,b\}$.

Since the number of variables in the chromatic cluster LP is $|L|\cdot 2^{|V|}$, we obtain the near-optimal solution rather than the actual optimal solution for the LP, whose procedure is described in Section~\ref{sec:solution}.
Thus, we have the following theorem:
\begin{theorem}
\label{thm:solution}
    The feasible solution $\{z_S^c\}$ for $\eqref{eq:CLP-CCC}$ such that $\obj(z)=(1+\varepsilon)\opt$ can be obtained in $n^{\textnormal{poly}(1/\varepsilon)}\cdot\textnormal{poly}(L)$ time.
\end{theorem}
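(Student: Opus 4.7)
The plan is to mirror the three-stage construction of \citep{Cao2024} for the non-chromatic cluster LP and lift each stage to the chromatic setting as previewed in Section~1.2. I would start from a constant-factor approximate solution for CCC (the Pivot-type $3$-approximation of \citep{klodt2021color} suffices) and run the preclustering step to produce a family of disjoint atoms together with the admissible-edge set. In the chromatic setting each atom inherits a unique color, forced by the fact that it sits inside a single cluster of the optimum with matching color; admissible edges incident to an atom are constrained to have that same color, while admissible edges between two singletons may retain several colors. One first has to verify that the chromatic preclustering cost is within a $(1+O(\varepsilon))$-factor of $\opt$: this follows because the CCC objective decomposes into a per-color $+$-cost plus a global $-$-cost, so charging the preclustering error in each color class separately to the corresponding monochromatic portion of the optimum gives the desired bound.

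Next I would write down the bounded sub-cluster LP with variables $y_S^{s,c}$ for each subset $S$ with $|S|\le\mathrm{poly}(1/\varepsilon)$, each cluster size $s\le\mathrm{poly}(1/\varepsilon)$, and each color $c\in L$. The constraints are the natural chromatic analogues of those in \citep{Cao2024}: the per-vertex and per-edge marginal constraints match the right-hand sides $1-x_u^c$ and $1-x_{uv}^c$ of \eqref{cond:CLP_vc}, \eqref{cond:CLP_ec}, and the preclustering constraints forbid a cluster from crossing a non-admissible edge or splitting an atom, with the extra stipulation that a cluster containing an atom inherit its color. The variable count becomes $n^{\mathrm{poly}(1/\varepsilon)}\cdot\mathrm{poly}(L)$, so the LP is solvable in the target time. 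A standard exchange argument, applied color by color, shows the LP optimum is at most $(1+\varepsilon)\opt$.

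The main work, and the step I expect to be the principal obstacle, is the conversion of the bounded sub-cluster LP solution into a feasible solution $\{z_S^c\}$ of \eqref{eq:CLP-CCC} of essentially the same objective value. I would extend the Raghavendra--Tan rounding of \citep{Cao2024,RT12} by first sampling a triple $(s,c,u)$ with probability proportional to $y_u^{s,c}$, then iteratively growing a cluster of size $s$ and color $c$ by sampling from the conditional distribution obtained after correlation rounding on $\{y_{Su}^{s,c}/y_u^{s,c}\}$. Because $c$ plays exactly the same role as the size parameter $s$, namely a categorical label attached to each cluster, the Raghavendra--Tan machinery applies verbatim to the bundled index $(s,c)$: after $\mathrm{poly}(1/\varepsilon)$ rounds of conditioning the pairwise marginals in the conditional distribution are $\varepsilon$-close to the product of the singletons, which is what drives the near-integrality of the rounded clusters.

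Finally I would verify that the sampled family $\{z_S^c\}$ satisfies \eqref{cond:CLP_vc}--\eqref{cond:CLP_bd} in expectation and matches the LP marginals up to an additive $\varepsilon$-error on each vertex and edge; averaging over repetitions yields an exactly feasible solution with objective within $(1+\varepsilon)$ of the bounded sub-cluster LP optimum, and hence within $(1+O(\varepsilon))\opt$. Rescaling $\varepsilon$ concludes the proof. The delicate point I anticipate is controlling the cost of the $\phi(uv)=\gamma$ edges, whose contribution $\sum_{c\in L}(1-x_{uv}^c)$ couples all $L$ color layers simultaneously; I would need to confirm that the per-layer rounding error aggregates into only a $\mathrm{poly}(L)$ factor in the running time rather than in the objective, which should hold because the correlation rounding error in the $(s,c)$-conditioned distribution is bounded uniformly over $c$ and the total cost is linear in the per-color marginals.
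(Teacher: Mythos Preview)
Your proposal is correct and follows essentially the same three-stage route as the paper: chromatic preclustering (Section~\ref{subsec:preclust}), the bounded sub-cluster LP with variables $y_S^{s,c}$ (Section~\ref{subsec:bsclp}), and Raghavendra--Tan sampling over the bundled index $(s,c,u)$ (Sections~\ref{subsec:sampling}--\ref{subsec:clp}). One small point to tighten: ``averaging over repetitions'' will not by itself produce an \emph{exactly} feasible solution, since constraint~\eqref{cond:CLP_v} is a hard equality $\sum_{c}(1-x_u^c)=1$ per vertex; the paper instead draws $\Delta y_\emptyset$ samples and then runs an explicit trimming step (Algorithm~\ref{alg:solution}) that removes the largest-index occurrences of each vertex until every $|R_u|$ equals $\lceil(1-\varepsilon)\Delta\rceil$, after which the normalized empirical counts are feasible by construction and the objective bound survives via a monotonicity argument on $|R_u\cap R_v|$ and $|R_u\setminus(R_u^{\phi(uv)}\cap R_v^{\phi(uv)})|$.
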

\subsection{Algorithmic Framework}
\label{sec:algorithm}
We can naturally extend the original cluster-based rounding algorithm by~\citep{Cao2024} to the chromatic setting, which is described in Algorithm~\ref{alg:alg1}.
\begin{algorithm}
\caption{Cluster-Based Rounding}
\label{alg:alg1}
\begin{algorithmic}
    \State \textbf{Input:} Near-optimal \eqref{eq:CLP-CCC} solution $\{z_S^c\}$
    \State \textbf{Output:} Chromatic clustering $\Phi$\State
    \State $\Phi(V^2)\leftarrow \gamma,\,V'\leftarrow V$
    \While{$V'\neq \emptyset$}
    \State Choose $(c,S)$ with probability proportional to $z_S^c$ over all $c\in L,\,\emptyset\neq S\subseteq V$
    \If{$V'\cap S\neq \emptyset$}
    \State $\Phi((V'\cap S)^2)\leftarrow c,\,V'\leftarrow V'\backslash S$
    \EndIf
    \EndWhile
\end{algorithmic}
\end{algorithm}

Using Algorithm~\ref{alg:alg1} exclusively results in $2+\varepsilon$-approximation algorithm, given that the solution to \eqref{eq:CLP-CCC} is $(1+O(\varepsilon))$-optimal.
\begin{theorem}
    Algorithm~\ref{alg:alg1} yields a $(2+\varepsilon)$-approximation algorithm for CCC with time complexity $n^{\textnormal{poly}(1/\varepsilon)}\cdot\textnormal{poly}(L)$.
\end{theorem}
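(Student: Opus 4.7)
My plan is to invoke Theorem~\ref{thm:solution} to obtain in time $n^{\textnormal{poly}(1/\varepsilon)}\cdot\textnormal{poly}(L)$ a feasible $\{z_S^c\}$ with $\obj(z) \leq (1+\varepsilon)\opt$, and then show that Algorithm~\ref{alg:alg1} run on this $z$ produces a clustering of expected cost at most $2\,\obj(z)$. Rescaling $\varepsilon$ by a constant factor then gives the $(2+\varepsilon)$-approximation claim. Because the construction behind Theorem~\ref{thm:solution} produces a $z$ of polynomial support, the while loop terminates in expected polynomial time once $z$ is in hand, matching the overall time bound.

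To estimate the expected cost I would compute, for each pair $u,v$, the probability they end up together in a cluster of each color. For $u\in V$ and $c\in L$ set $A_c^u := \sum_{S\ni u} z_S^c = 1 - x_u^c$, and for $uv\in\binom{V}{2}$ set $A_c^{uv} := \sum_{S\supseteq\{u,v\}} z_S^c = 1-x_{uv}^c$; constraint~\eqref{cond:CLP_v} gives $\sum_c A_c^u = \sum_c A_c^v = 1$, and trivially $A_c^{uv}\leq A_c^u$. The structural observation about Algorithm~\ref{alg:alg1} is that $u$ and $v$ receive their final colors from the \emph{first} sampled $(c,S)$ whose $S$ meets $\{u,v\}$: if that sample contains both, they are placed in the same cluster of color $c$, while if it contains only one (say $u$), then $u$ is permanently removed from $V'$ and no later sample can merge them. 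A standard first-hit calculation over the i.i.d.\ sample stream then gives
\[
\Pr[u,v\text{ in same cluster of color }c] = \frac{A_c^{uv}}{T}, \qquad T := \sum_{c'}(A_{c'}^u + A_{c'}^v - A_{c'}^{uv}) = 2 - \sum_{c'} A_{c'}^{uv},
\]
and from $A_c^{uv}\leq A_c^u$ and $\sum_c A_c^{uv}\geq A_{c_0}^{uv}=1-x_{uv}^{c_0}$ one reads off the bounds $1\leq T\leq 1+x_{uv}^{c_0}$ for every color $c_0$.

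The per-edge disagreement bound now reduces to algebra. For a positive edge $uv$ with $\phi(uv)=c_0\in L$, the disagreement probability is $(T-A_{c_0}^{uv})/T$, and using $T\leq 1+x_{uv}^{c_0}$ together with $T\geq 1$,
\[
\Pr[\textnormal{disagreement on }uv] \;\leq\; \frac{(1+x_{uv}^{c_0})-(1-x_{uv}^{c_0})}{T} \;=\; \frac{2x_{uv}^{c_0}}{T} \;\leq\; 2x_{uv}^{c_0}.
\]
For a negative edge the disagreement probability is $\sum_c A_c^{uv}/T = (2-T)/T \leq 2-T = \sum_c(1-x_{uv}^c)$, already within factor $1$ of its LP contribution. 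Summing over all edges gives $E[\text{cost}] \leq 2\,\obj(z) \leq 2(1+\varepsilon)\opt$.

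The main subtlety I anticipate is the clean justification of the ``first-hit determines fate'' reduction: one must verify that removal from $V'$ is permanent, so that no later sample can merge $u$ with an already-removed $v$, and then confirm that the rejection-sampled conditional distribution of the first hitting iteration is proportional to $z_S^c\,\mathbf{1}[S\cap\{u,v\}\neq\emptyset]$. Once this is in place, the algebraic step uses only the equality constraints \eqref{cond:CLP_vc}--\eqref{cond:CLP_ec} together with \eqref{cond:CLP_v}; the strengthened triangle inequality~\eqref{cond:CCC_metric_strong} and the additional constraint~\eqref{cond:LP_strong} are \emph{not} needed for this weaker bound, and will instead be the levers used to push the factor down to $18/11+\varepsilon$ in Section~\ref{sec:triangle}.
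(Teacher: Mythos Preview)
Your proposal is correct and follows essentially the same approach as the paper: both compute the first-hit probabilities $\Pr[u,v\text{ together in color }c]=(1-x_{uv}^c)/T$ with $T=2-\sum_{c'}(1-x_{uv}^{c'})$ and then bound the per-edge disagreement by twice the LP contribution using $1\le T\le 1+x_{uv}^{c_0}$. Your write-up is slightly more explicit about why $T\in[1,1+x_{uv}^{c_0}]$ and about the first-hit justification, but the argument is the same.
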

\begin{proof}
    In Algorithm~\ref{alg:alg1}, for every $uv\in \binom{V}{2}$, the probability of vertices being separated is $\frac{\sum_{c\in L}(1-x_{uv}^c)}{2-\sum_{c\in L}(1-x_{uv}^c)}\leq \sum_{c\in L}(1-x_{uv}^c)$, while the probability of vertices not being in the same cluster of color $c'$ is $1-\frac{1-x_{uv}^{c'}}{2-\sum_{c\in L}(1-x_{uv}^c)}=\frac{2x_{uv}^{c'}-\sum_{c\neq c'}(1-x_{uv}^c)}{2-\sum_{c\in L}(1-x_{uv}^c)}\leq \frac{2x_{uv}^{c'}}{1+x_{uv}^{c'}}$.
    Therefore, the probability of violation for the edge $uv$ is bounded above by twice the value of the corresponding term in the objective.
    Since $\obj(\{z_S^c\})=(1+O(\varepsilon))\opt$, this finalizes the proof. 
\end{proof}
According to the proof, the bottleneck of the approximation factor occurs on the $+$ edge whose $x$-value is close to $0$.
This behavior is different from the usual pivot-based algorithm, whose bottleneck usually occurs when some $x$-value is large with the tight triangle inequality.
Thus, this factor can be moderated by randomizing the execution of algorithm between cluster-based algorithm (Algorithm~\ref{alg:alg1}) and pivot-based algorithm (Algorithm~\ref{alg:alg2}~\citep{XiuHTCH22, Lee2025}.), which is formally described in Algorithm~\ref{alg:mix}~\citep{Cao2024}.

Applying the derandomization, the algorithm would choose the better option out of two:
Intuitively, the algorithm would choose the pivot-based algorithm if the input $\phi$ is \emph{almost well-clustered} (i.e., clusters of the optimal clustering consist of a very low number of violating internal edges), and choose cluster-based algorithm otherwise.
\begin{algorithm}
\caption{Pivot-Based Rounding}
\label{alg:alg2}
\begin{algorithmic}
    \State \textbf{Input:} Near-optimal \eqref{eq:CLP-CCC} solution $\{z_S^c\}$
    \State \textbf{Output:} Chromatic clustering $\Phi$\State
    \State $\Phi(V^2)\leftarrow \gamma,\,V_c\leftarrow \emptyset$ for all $c\in L$
    \For{$u\in V$}
    \If{$\exists c\in L$ s.t.\ $x_u^c<\frac{1}{2}$}
        \State $V_c\leftarrow V_c\cup\{u\}$
    \EndIf
    \EndFor
    \For{$c\in L$}
    \While{$V_c\neq \emptyset$}
    \State Choose $u\in V_c$ uniformly random
    \State $C\leftarrow \{u\}$
    \For{$v\in V_c\backslash\{u\}$}
        \State Set $p_{uv}$ as following:
        \[p_{uv}=\begin{cases}
            f^+(x_{uv}),&\phi(uv)=c;\\
            f^-(x_{uv}),&\phi(uv)=\gamma;\\
            f^\circ(x_{uv}),&\text{otherwise}.
        \end{cases}\]
        \State $C\leftarrow C\cup\{v\}$ with probability $1-p_{uv}$
    \EndFor
    \State $\Phi(C^2)\leftarrow c,\,V_c\leftarrow V_c\backslash C$
    \EndWhile
    \EndFor
\end{algorithmic}
\end{algorithm}
\begin{algorithm}
    \caption{Cluster/Pivot-based Rounding}
    \label{alg:mix}
    \begin{algorithmic}
    \State \textbf{Input:} Near-optimal \eqref{eq:CLP-CCC} solution $\{z_S^c\}$, target approximation factor $\alpha\in [1.5,2]$
    \State \textbf{Output:} Chromatic clustering $\Phi$\State\State
    \Return{Algorithm~\ref{alg:alg1} with probability $\alpha/2$, Algorithm~\ref{alg:alg2} otherwise}
    \end{algorithmic}
\end{algorithm}

Indeed, Algorithm~\ref{alg:mix} achieves the significantly smaller approximation factor of $\frac{18}{11}+\varepsilon\approx 1.64+\varepsilon$, even with greedy selection of pivoted edges (formally, $f^+=0,\,f^-=f^\circ=1$).
The proof of the factor is provided in Section~\ref{sec:triangle}.

\subsection{Clustering as a Binary Function}
For analysis purposes, we provide an alternative definition for chromatic clustering.
\begin{proposition}[Chromatic Clustering as a Binary Function]
    Chromatic Clustering $\Phi:\mathcal{C}\to L$ has a one-to-one correspondence with
    the symmetric function $\Phi:V\times V\to L\cup\{0,\gamma\}$ such that
    \begin{enumerate}
    \item $\Phi(u,v)=0\iff u=v,$
    \item $\Phi(u,v)\in L\implies \Phi(u,w)=\Phi(v,w)\in\{\Phi(u,v),\gamma\},\,\forall w\in V\backslash\{u,v\}.$
    \end{enumerate}
\end{proposition}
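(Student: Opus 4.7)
The plan is to exhibit explicit maps in both directions and show that they are mutual inverses. Given a chromatic clustering $(\mathcal{C},\Phi)$ in the original sense, I would define the binary function $\widetilde\Phi:V\times V\to L\cup\{0,\gamma\}$ by $\widetilde\Phi(u,u)=0$, $\widetilde\Phi(u,v)=\Phi(C)$ if $u\neq v$ and both $u,v\in C$ for some $C\in\mathcal{C}$, and $\widetilde\Phi(u,v)=\gamma$ if $u\neq v$ lie in distinct clusters. In the reverse direction, given any symmetric $\widetilde\Phi$ satisfying (1) and (2), I would define the relation $u\sim v$ by $\widetilde\Phi(u,v)\in L\cup\{0\}$, take $\mathcal{C}$ to be its equivalence classes, and set the cluster color $\Phi(C):=\widetilde\Phi(u,v)$ for any distinct $u,v\in C$ when $|C|\geq 2$ (with singleton clusters receiving an arbitrary color in $L$, since their color is vacuous for both the objective and the binary representation).

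The forward direction is by inspection. Symmetry and property (1) are immediate. For property (2), if $\widetilde\Phi(u,v)\in L$ then $u,v$ share a cluster $C$ with $\Phi(C)=\widetilde\Phi(u,v)$, so for any third vertex $w$ either $w\in C$ (giving $\widetilde\Phi(u,w)=\widetilde\Phi(v,w)=\Phi(C)$) or $w\notin C$ (giving $\widetilde\Phi(u,w)=\widetilde\Phi(v,w)=\gamma$); in both cases the required conclusion holds.

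For the reverse direction, the central task is to verify that $\sim$ is an equivalence relation and that the induced color is well-defined. Reflexivity and symmetry follow from (1) and the symmetry of $\widetilde\Phi$. For transitivity, suppose $u\sim v$ and $v\sim w$ with all three distinct; then $\widetilde\Phi(u,v),\widetilde\Phi(v,w)\in L$, so applying condition (2) to the triple $(u,v,w)$ yields $\widetilde\Phi(u,w)=\widetilde\Phi(v,w)\in\{\widetilde\Phi(u,v),\gamma\}$, and since $\widetilde\Phi(v,w)\in L$ we get $\widetilde\Phi(u,w)=\widetilde\Phi(v,w)=\widetilde\Phi(u,v)$. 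The same identity also shows that $\widetilde\Phi$ takes a single value across all distinct pairs within a non-singleton equivalence class, so $\Phi(C)$ is well-defined. Checking that the two constructions compose to the identity on each side is then immediate from unwinding the definitions.

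The only subtlety, and hence the main obstacle, is making sure condition (2) is invoked on the right triple to propagate the color identity inside a potential cluster, and handling singletons, where condition (2) is vacuous and the color carries no observable information. Once one fixes the convention that singleton colors are irrelevant (or, equivalently, identifies clusterings that agree on non-singletons), the correspondence becomes a straightforward definitional reformulation.
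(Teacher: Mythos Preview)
Your proof is correct and is the natural way to establish the correspondence; the paper itself states the proposition without proof, treating it as a definitional reformulation. Your handling of the singleton-color subtlety is appropriate and consistent with the paper's implicit convention (singleton colors never enter the cost function~\eqref{eq:disagreement_CCC}, so clusterings differing only on singleton labels are identified).
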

This interpretation is analogous to the ultrametric from the UMVD problem.
When comparing the order structure of the codomain between CCC and UMVD, as shown in Figure~\ref{fig:order},
elements of the codomain for the CCC problem are organized horizontally,
while those for the UMVD problem are organized vertically.
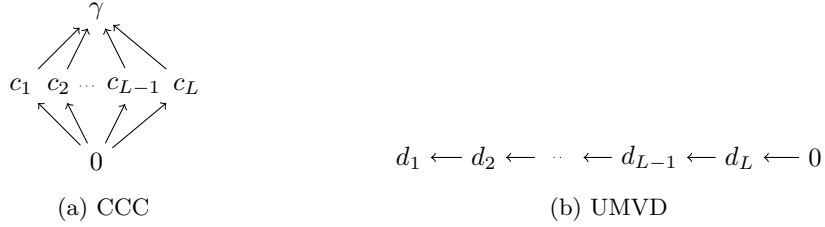
\begin{figure}
    \centering
    \begin{subfigure}[b]{0.4\textwidth}
        \centering
        \begin{tikzpicture}
    \node (n) at (-1.2, 1) {};
    \node (a) at (0,0) {$\gamma$};
    \node (b1) at (-1,-1) {$c_1$};
    \node (b2) at (-0.5,-1) {$c_2$};
    \node (b3) at (0.5,-1) {$c_{L-1}$};
    \node (b4) at (1.2,-1) {$c_L$};
    \node (c) at (0,-2) {$0$};
    \draw[<-] (a) -- (b1);
    \draw[<-] (a) -- (b2);
    \draw[<-] (a) -- (b3);
    \draw[<-] (a) -- (b4);
    \draw[->] (c) -- (b1);
    \draw[->] (c) -- (b2);
    \draw[->] (c) -- (b3);
    \draw[->] (c) -- (b4);
    \draw[dotted] (b2) -- (b3);
\end{tikzpicture}
        \caption{CCC}
    \end{subfigure}
    \begin{subfigure}[b]{0.4\textwidth}
        \centering\begin{tikzpicture}
    \node (a) at (-2,0) {$d_1$};
    \node (b) at (-1,0) {$d_2$};
    \node (c) at (1.2,0) {$d_{L-1}$};
    \node (d) at (2.4,0) {$d_L$};
    \node (e) at (3.4,0) {$0$};
    \node (b1) at (-0.2,0) {};
    \node (c1) at (0.2,0) {};
    \draw[<-] (a) -- (b);
    \draw[<-] (b) -- (b1);
    \draw[dotted] (b1) -- (c1);
    \draw[<-] (c1) -- (c);
    \draw[<-] (c) -- (d);
    \draw[<-] (d) -- (e);
\end{tikzpicture}
        \caption{UMVD}
    \end{subfigure}
    
    \caption{Comparison in order structure of codomain for two problems.
    The directional edge $a\to b$ indicates $a\leq b$. Self-loop on each vertex is omitted.}
    \label{fig:order}
\end{figure}

Next, we provide some terminology helpful for analyzing local cost or its difference.
\begin{definition}
    For a vertex $v$, define $\Phi_v:=\Phi(v,\cdot)$.
\end{definition}
\begin{definition}
    For $f,g:V\to R$ and $K\subseteq V$, define $d_0(f,g):=|\{v\in V\mid f(v)\neq g(v)\}|$ and $d_0^K(f,g):=d_0(f,g)+|\{v\in V\backslash K\mid f(v)\neq g(v)\}|$.
\end{definition}
Note that the weight given for the vertices outside $K$ is twice larger than for those inside $K$ in the definition of $d_0^K$.
This terminology is especially helpful when analyzing the difference in objective value between two clusterings whose difference occurs only when at least one of the vertex in the pair is included in $K$.
\begin{proposition}
\label{prop:obj}
    $\obj(\Phi)=\frac{1}{2}\sum_{u\in V}d_0(\phi_u,\Phi_u)$.
\end{proposition}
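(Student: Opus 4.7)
The plan is to reduce $\obj(\Phi)$ to a single pair-wise count and then exploit the symmetry of $\phi$ and $\Phi$ so that each violating pair is counted twice when we sum over vertices.

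First, I would unfold $\obj(\Phi)$ from \eqref{eq:disagreement_CCC} and do a case analysis on the value of $\phi(uv) \in L \cup \{\gamma\}$ together with whether $u,v$ lie in the same cluster (and, if so, on the color of that cluster). Using the binary-function interpretation from the preceding proposition, namely that $\Phi(u,v) = \gamma$ precisely when $u,v$ lie in different clusters and $\Phi(u,v) = \Phi(C_i)$ when $u,v \in C_i$, each of the five cases (a positive edge that is split; a positive edge inside a matching-color cluster; a positive edge inside a mismatching-color cluster; a negative edge that is split; a negative edge inside a cluster) contributes exactly $1$ to $\obj(\Phi)$ when $\phi(u,v) \neq \Phi(u,v)$ and $0$ otherwise. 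Summing over unordered pairs gives
\[
    \obj(\Phi) \;=\; \bigl|\{\,uv \in \tbinom{V}{2} : \phi(u,v) \neq \Phi(u,v)\,\}\bigr|.
\]

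Second, I would unpack the definition $d_0(\phi_u, \Phi_u) = |\{\,v \in V : \phi_u(v) \neq \Phi_u(v)\,\}|$ under the natural convention $\phi(u,u) := 0 = \Phi(u,u)$, which is consistent with item~1 of the preceding proposition and makes the diagonal term drop out. Because both $\phi$ and $\Phi$ are symmetric functions on $V \times V$, each unordered violating pair $\{u,v\}$ contributes exactly once to $d_0(\phi_u, \Phi_u)$ and once to $d_0(\phi_v, \Phi_v)$, and to no other term in the outer sum. Hence $\sum_{u \in V} d_0(\phi_u, \Phi_u)$ equals twice the pair count displayed above, and dividing by $2$ yields the claimed identity.

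I do not expect a genuine obstacle here; the statement is essentially a bookkeeping identity. The one place that needs a careful eye is the positive-edge-in-wrong-color-cluster case, where the conditions ``$\Phi(C_i) \neq \phi(u,v)$'' and ``$\Phi(u,v) \neq \phi(u,v)$'' must be seen to coincide under the binary-function encoding; once this case is verified, the remaining cases, the symmetry argument, and the factor of $\tfrac{1}{2}$ fall out immediately.
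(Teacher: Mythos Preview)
Your proposal is correct: the identity is indeed just bookkeeping, and your two-step argument (rewrite $\obj(\Phi)$ as the count of unordered pairs with $\phi(u,v)\neq\Phi(u,v)$, then observe that the vertex sum double-counts each such pair) is exactly the intended verification. The paper in fact states this proposition without proof, treating it as immediate from the definitions, so your write-up simply spells out what the authors leave implicit; the only detail worth flagging, which you already handle, is the convention $\phi(u,u):=0$ so that the diagonal term in $d_0(\phi_u,\Phi_u)$ vanishes.
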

\begin{proposition}
\label{prop:obj_diff}
    If $\{(u,v)\mid \Phi^1(u,v)\neq \Phi^2(u,v)\}\subseteq K\times V+V\times K$, then
    \[\obj(\Phi^1)-\obj(\Phi^2)=\frac{1}{2}\sum_{u\in K}d_0^K(\phi_u,\Phi^1_{u})-d_0^K(\phi_u,\Phi^2_{u}).\]
\end{proposition}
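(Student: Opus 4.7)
The plan is to reduce the identity to a double-sum manipulation via Proposition~\ref{prop:obj}, exploiting the symmetry of $\phi$ and $\Phi^i$ together with the hypothesis that $\Phi^1$ and $\Phi^2$ can differ only on pairs that touch $K$. First, I would introduce the shorthand $\delta(u,v) := \mathbf{1}[\phi(u,v)\neq\Phi^1(u,v)] - \mathbf{1}[\phi(u,v)\neq\Phi^2(u,v)]$, which is symmetric in its arguments and vanishes whenever $\{u,v\}\cap K = \emptyset$. Proposition~\ref{prop:obj} then gives
\[\obj(\Phi^1)-\obj(\Phi^2) = \tfrac{1}{2}\sum_{u\in V}\sum_{v\in V}\delta(u,v).\]

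Next, I would split this double sum according to whether each coordinate lies in $K$. The hypothesis kills the $(V\backslash K)\times(V\backslash K)$ block, while the symmetry of $\delta$ makes the $K\times(V\backslash K)$ and $(V\backslash K)\times K$ blocks equal, so the total collapses to $\sum_{u\in K,\,v\in K}\delta(u,v) + 2\sum_{u\in K,\,v\notin K}\delta(u,v)$. Regrouping by the outer index $u\in K$, the inner sum becomes $\sum_{v\in V}\delta(u,v) + \sum_{v\notin K}\delta(u,v)$.

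Finally, I would recognize each of these two pieces directly from the definitions: the first is $d_0(\phi_u,\Phi^1_u)-d_0(\phi_u,\Phi^2_u)$, and the second is $|\{v\notin K : \phi(u,v)\neq\Phi^1(u,v)\}| - |\{v\notin K : \phi(u,v)\neq\Phi^2(u,v)\}|$, so their sum is exactly $d_0^K(\phi_u,\Phi^1_u)-d_0^K(\phi_u,\Phi^2_u)$ by the definition of $d_0^K$. Multiplying by $\tfrac{1}{2}$ and summing over $u\in K$ yields the claimed identity. The only step requiring genuine care is the symmetry-based double-counting of vertices in $V\backslash K$, which is precisely the source of the factor of two that distinguishes $d_0^K$ from $d_0$; the remainder is routine bookkeeping.
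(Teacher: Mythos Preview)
Your argument is correct. The paper states this proposition without proof, treating it as an immediate consequence of Proposition~\ref{prop:obj} and the definition of $d_0^K$; your write-up is exactly the natural unpacking of that implicit argument, with the symmetry-based double-counting correctly accounting for the extra weight on $V\setminus K$.
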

\begin{proposition}
\label{prop:d0diff}
    $d_0(f,g)-d_0(f,h)=|\{v\in V\mid g(v)\neq f(v)=h(v)\}|-|\{v\in V\mid g(v)= f(v)\neq h(v)\}|$.
\end{proposition}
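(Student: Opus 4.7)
The plan is to prove this by partitioning $V$ according to the two binary conditions $f(v)=g(v)$ and $f(v)=h(v)$ and computing each side's contribution class by class. This is a direct combinatorial identity and requires no auxiliary structure from the CCC setting.

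First I would write the domain as the disjoint union of four sets:
\begin{align*}
A_1 &= \{v\in V \mid f(v)=g(v)\text{ and }f(v)=h(v)\},\\
A_2 &= \{v\in V \mid f(v)=g(v)\text{ and }f(v)\neq h(v)\},\\
A_3 &= \{v\in V \mid f(v)\neq g(v)\text{ and }f(v)=h(v)\},\\
A_4 &= \{v\in V \mid f(v)\neq g(v)\text{ and }f(v)\neq h(v)\}.
\end{align*}
By the definition of $d_0$, each $v$ contributes $1$ to $d_0(f,g)$ exactly when $f(v)\neq g(v)$, hence $d_0(f,g)=|A_3|+|A_4|$, and analogously $d_0(f,h)=|A_2|+|A_4|$. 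Subtracting gives $d_0(f,g)-d_0(f,h)=|A_3|-|A_2|$.

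Next I would match these two sets with those appearing on the right-hand side of the statement. By construction, $A_3=\{v\in V\mid g(v)\neq f(v)=h(v)\}$ and $A_2=\{v\in V\mid g(v)=f(v)\neq h(v)\}$, which are exactly the two sets in Proposition~\ref{prop:d0diff}. This completes the identification and hence the proof.

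There is no real obstacle here; the only care needed is to make sure the four classes are genuinely disjoint and cover $V$ (which is immediate since the conditions $f(v)=g(v)$ and $f(v)=h(v)$ are each a clean dichotomy), and that the cardinality computation of $d_0$ uses only membership in $A_3\cup A_4$ or $A_2\cup A_4$. The statement is really just a ``symmetric difference'' style accounting lemma, and will be used later to compare the local cost of two clusterings that differ only on a restricted set of vertices, as flagged by Propositions~\ref{prop:obj} and~\ref{prop:obj_diff}.
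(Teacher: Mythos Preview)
Your proof is correct. The paper states this proposition without proof, treating it as evident; your four-case partition is the natural direct verification and there is nothing to compare against.
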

The following definitions provide alternative definitions of basic terminologies on chromatic clustering from the perspective of a binary function.
\begin{definition}
    The cluster $C$ of $\Phi$, denoted as $C\in \Phi$, is a maximal subset of $V$ such that $C^2\subseteq \Phi^{-1}(L\cup \{0\})$.
\end{definition}
\begin{definition}
    If $S\subseteq \exists C\in \Phi$ for a nonempty subset $S$ of $V$ (that is, a subset of some cluster), we define $\Phi(S)$ as $\max \Phi(C^2)$.
\end{definition}

\section{Triangle Analysis}
\label{sec:triangle}
We first prove that Algorithm~\ref{alg:mix} achieves an approximation factor of $(\frac{18}{11}+\varepsilon)$.
Recall our selection of rounding functions:
\begin{equation}
    \label{eq:round}f^+(x)=0,\quad f^-(x)=f^\circ (x) = 1.
\end{equation}
The goal is to ensure that the Algorithm~\ref{alg:mix} with the target approximation factor $\alpha$ actually achieves the $\alpha+\varepsilon$-approximation if $\alpha\geq \frac{18}{11}$.

Let any $uv\in \binom{V}{2}$ be given.
If $\not\exists c\in L$ such that $u,v\in V_c$, then $\Phi(u,v)=\gamma$ if Algorithm~\ref{alg:alg2} is used, due to the first subroutine of the algorithm.
On the other hand, any of the values $x_{uv}^c$ is bounded below by $1/2$, since either $x_u^c$ or $x_v^c$ is bounded below by $1/2$.
Therefore, if $\phi(uv)\in L$, then the cost incurred on the edge $uv$ is
\[\frac{\alpha}{2}\cdot \frac{2x_{uv}^{\phi(uv)}-\sum_{c\neq \phi(uv)}(1-x_{uv}^c)}{2-\sum_{c\in L}(1-x_{uv}^c)}+\left(1-\frac{\alpha}{2}\right)\cdot 1
\leq \frac{\alpha}{2}\cdot\frac{2x_{uv}^{\phi(uv)}}{1+x_{uv}^{\phi(uv)}}+1-\frac{\alpha}{2}=1-\frac{\alpha}{2}\cdot\frac{1-x_{uv}^{\phi(uv)}}{1+x_{uv}^{\phi(uv)}},\]
whose difference with $\alpha x_{uv}^{\phi(uv)}$ is
\[\alpha x_{uv}^{\phi(uv)}-\left(1-\frac{\alpha}{2}\cdot\frac{1-x_{uv}^{\phi(uv)}}{1+x_{uv}^{\phi(uv)}}\right)=\frac{\alpha}{2}\cdot \left(2(1+x_{uv}^{\phi(uv)})+\frac{2}{1+x_{uv}^{\phi(uv)}}-3\right)-1\geq \frac{\alpha}{2}\cdot\frac{4}{3}-1\geq 0,\]
since $x_{uv}^{\phi(uv)}\geq 1/2>0$ and $\alpha \geq 1.5$.
Thus, the cost incurred on the edge is bounded above by $\alpha$ times the corresponding LP value.
Since the inequality is tight, this also implies that Algorithm~\ref{alg:mix} cannot achieve the approximation factor below $1.5$ due to the strict majority selection in the first subroutine of Algorithm~\ref{alg:alg2}.

If $\phi(uv)=\gamma$, then the cost incurred on the edge $uv$ is
\[\frac{\alpha}{2}\cdot\frac{\sum_{c\in L}(1-x_{uv}^c)}{2-\sum_{c\in L}(1-x_{uv}^c)}+\left(1-\frac{\alpha}{2}\right)\cdot 0\leq \frac{\alpha}{2}\cdot\sum_{c\in L}(1-x_{uv}^c)\leq \alpha\cdot \sum_{c\in L}(1-x_{uv}^c),\]
which also implies that the cost incurred on the edge is bounded above by $\alpha$ times the corresponding LP value.

Therefore, it is sufficient to check on edges $uv$ where both $x_u^c$ and $x_v^c$ are smaller than $1/2$ for some $c\in L$, i.e., the vertex $u$ and $v$ have the same strict majority color as $c$.
We can apply the usual triangle analysis to analyze Algorithm~\ref{alg:alg2} since the second subroutine of the algorithm is exactly the color-wise execution of the pivot algorithm~\citep{XiuHTCH22,Lee2025}.

Now, fix any $c\in L$.
\begin{definition}[Edge sign assignment with respect to color $c$]
    $E^+=\phi^{-1}(c)\cap \binom{V_c}{2},\,E^-=\phi^{-1}(\gamma)\cap \binom{V_c}{2},\,E^\circ=\phi^{-1}(L\backslash \{c\})\cap \binom{V_c}{2}$.
\end{definition}

Suppose:
\begin{equation}\label{eq:triangle}
    e.cost_w(uv)+e.cost_u(vw)+e.cost_v(wu)\leq e.lp_w(uv)+ e.lp_u(vw)+ e.lp_v(wu)
\end{equation}
for all $u,v,w\in V_c$, where
\begin{equation}
e.cost_w(u,v) = \begin{cases}
    p_{uw}(1 - p_{vw}) + (1 - p_{uw})p_{vw}, & uv \in E^+; \\
    (1 - p_{uw})(1 - p_{vw}), & uv \in E^-; \\
    1 - p_{uw}p_{vw}, & uv \in E^\circ;
\end{cases}
\end{equation}
\begin{equation}\label{eq:ccc_elp}
e.lp_w(u,v) = (1 - p_{uw}p_{vw})\cdot b(uv),
\end{equation}
\begin{equation}\label{eq:ccc_b}
b(uv) = \begin{cases}
    \frac{\alpha}{1-\alpha/2}\cdot\frac{{x_{uv}^c}^2}{1+x_{uv}^c}, & uv \in E^+; \\
    \frac{\alpha}{1-\alpha/2}\cdot\frac{(1+2x_{uv}^c)(1 - x_{uv}^c)}{2(1+x_{uv}^c)}, & uv \in E^-; \\
    \frac{\alpha}{1-\alpha/2}\cdot\frac{\max\left\{\frac{1}{2}, 1 - x_{uv}^c, 1 - x_{vw}^c, 1 - x_{wu}^c \right\}^2}{1+\max\left\{\frac{1}{2}, 1 - x_{uv}^c, 1 - x_{vw}^c, 1 - x_{wu}^c \right\}}, & uv \in E^\circ.
\end{cases}
\end{equation}
Here, $e.cost_w(u,v)$ denotes the probability of a violation on the edge $uv$ conditioned on the pivot $w$, while $e.lp_w(u,v)$ denotes the probability of at least one of $u$ and $v$ contained in the cluster $C$ conditioned on the pivot $w$ multiplied by the charging factor $b$.
Then the following is the sufficient condition to achieve the approximation factor of $\alpha+\varepsilon$~\citep{Cao2024}:
\begin{align*}
    \phi(uv)\in L&\implies\frac{\alpha}{2}\cdot \frac{2x_{uv}^{\phi(uv)}-\sum_{c\neq \phi(uv)}(1-x_{uv}^c)}{2-\sum_{c\in L}(1-x_{uv}^c)}+\left(1-\frac{\alpha}{2}\right)\cdot b(uv)
\leq \alpha x_{uv}^{\phi(uv)},\\
\phi(uv)\in \gamma&\implies \frac{\alpha}{2}\cdot\frac{\sum_{c\in L}(1-x_{uv}^c)}{2-\sum_{c\in L}(1-x_{uv}^c)}+\left(1-\frac{\alpha}{2}\right)\cdot b(uv)\leq \alpha\cdot \sum_{c\in L}(1-x_{uv}^c),
\end{align*}
which turns out to be true.
\begin{lemma}
    If inequality~\eqref{eq:triangle} holds for all possible configurations, then Algorithm~\ref{alg:mix} yields an $\alpha+\varepsilon$-approximation algorithm for CCC.
\end{lemma}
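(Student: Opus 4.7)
The plan is to decompose the analysis edge-by-edge, reduce to the two displayed ``sufficient condition'' inequalities that appear just before the lemma, and combine the triangle hypothesis \eqref{eq:triangle} with the cluster-based bound already proven for Algorithm~\ref{alg:alg1}. First I would dispense with all pairs $uv$ for which there is no color $c$ with both $u,v\in V_c$. The discussion preceding the lemma already verifies, for both $\phi(uv)\in L$ and $\phi(uv)=\gamma$, that the expected cost of Algorithm~\ref{alg:mix} on such an edge is at most $\alpha$ times its LP contribution, using only $x_{uv}^{c}\ge 1/2$ and $\alpha\ge 3/2$. So all further work concerns the pairs $u,v\in V_c$ for some (necessarily unique, by strict majority) color $c$, which I fix.

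Second, on this restricted instance the pivot subroutine of Algorithm~\ref{alg:alg2} is exactly a color-$c$ pivot with the chromatic sign assignment $E^+,E^-,E^\circ$ defined above, run with rounding probabilities $p_{uv}$. I would then invoke the standard Ailon--Charikar--Newman / CMSY charging: conditioning on which element of $\{u,v,w\}$ is the first pivot chosen from $V_c$ and summing the contributions over all triples $uvw$, the expected violation probability on $uv$ under Algorithm~\ref{alg:alg2} equals a nonnegative combination over $w$ of $e.cost_w(uv)$; the same combination applied to the right-hand side of \eqref{eq:triangle} telescopes (since the factors $(1-p_{uw}p_{vw})$ are exactly the conditional probabilities that $u$ or $v$ ends up in the cluster rooted at $w$) and collapses to $b(uv)$. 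Thus the pivot subroutine incurs expected cost at most $b(uv)$ on every such edge.

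Third, plugging both subroutine bounds into the mixture of Algorithm~\ref{alg:mix} gives an expected cost on $uv$ of at most
\[
\tfrac{\alpha}{2}\cdot\mathrm{cost}_{\mathrm{cluster}}(uv)+\bigl(1-\tfrac{\alpha}{2}\bigr)\cdot b(uv),
\]
where $\mathrm{cost}_{\mathrm{cluster}}(uv)$ is the cluster-based expression already computed in the proof of Algorithm~\ref{alg:alg1}. The two displayed sufficient inequalities stated immediately before the lemma assert exactly that this mixture is bounded by $\alpha$ times the LP contribution of $uv$ in both the $\phi(uv)\in L$ and $\phi(uv)=\gamma$ cases. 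Summing over all edges and invoking Theorem~\ref{thm:solution} (so $\obj(\{z_S^c\})=(1+O(\varepsilon))\opt$) yields $\mathbb{E}[\obj(\Phi)]\le (\alpha+\varepsilon)\opt$ after rescaling $\varepsilon$.

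The main obstacle is verifying the two sufficient inequalities for every admissible value of $x_{uv}^{c}\in[0,1]$ (and, in the $E^\circ$ case, for every feasible triple producing the maximum inside $b$). After using $\tfrac{2x-\sum_{c'\neq c}(1-x_{uv}^{c'})}{2-\sum_{c'}(1-x_{uv}^{c'})}\le \tfrac{2x}{1+x}$ and the analogous simplification for the $\gamma$ term, each inequality reduces to showing that a single univariate rational function of $x=x_{uv}^{c}$ is nonnegative on $[0,1]$; at $\alpha=18/11$ the worst $x$-values are precisely where equality is attained, which is what pins the approximation factor. Verifying this reduces to clearing denominators, factoring a cubic in $x$, and checking the sign of its discriminant, which is routine but is the one place where the specific constant $18/11$ enters.
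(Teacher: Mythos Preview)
Your framework in Steps~1--3 is correct and matches the paper: dispense with edges lacking a common majority color, use the standard triangle-charging to convert \eqref{eq:triangle} into a per-edge pivot cost bound of $b(uv)$, and reduce to the two displayed sufficient inequalities. This is exactly the structure the paper sets up before the lemma.

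The gap is in your final paragraph, which is where the actual content of \emph{this} lemma's proof lives. Two concrete issues:

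\begin{enumerate}
\item The constant $18/11$ does \emph{not} enter here. The budget $b(uv)$ carries the factor $\frac{\alpha}{1-\alpha/2}$ precisely so that $(1-\alpha/2)\cdot b(uv)$ cancels it, and the sufficient inequalities then hold for \emph{every} $\alpha$; in the $E^+$ case the bound is in fact an equality. The value $18/11$ is pinned only by the triangle inequality \eqref{eq:triangle} itself, which is the \emph{next} lemma. So there is no cubic to factor and no discriminant to check in this proof.
\item The first displayed sufficient condition (for $\phi(uv)\in L$) splits into two sub-cases according to whether $\phi(uv)=c$ (so $uv\in E^+$) or $\phi(uv)\in L\setminus\{c\}$ (so $uv\in E^\circ$), and these have different $b(uv)$. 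The $E^\circ$ case is \emph{not} univariate in $x_{uv}^c$: the budget involves $\max\{\tfrac12,1-x_{uv}^c,1-x_{vw}^c,1-x_{wu}^c\}$ while the right-hand side involves $x_{uv}^{\phi(uv)}$ for a different color. The paper closes this case via the inequality $\max\{\tfrac12,1-x_{uv}^c,1-x_{vw}^c,1-x_{wu}^c\}\le x_{uv}^{\phi(uv)}$ (from \cite{XiuHTCH22}) together with monotonicity of $g(x)=x^2/(1+x)$. Similarly, the $E^-$ case is handled by monotonicity of $h(x)=\frac{(3-2x)x}{2(2-x)}$ to pass from $1-x_{uv}^c$ to $\sum_{c'}(1-x_{uv}^{c'})$. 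Your ``clear denominators and factor'' plan does not capture either of these steps.
\end{enumerate}
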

\begin{proof}
    Consider the following three cases:
    \begin{enumerate}
        \item $\phi(uv)=c$
        \begin{align*}
            \frac{\alpha}{2}\cdot \frac{2x_{uv}^{\phi(uv)}-\sum_{c'\neq \phi(uv)}(1-x_{uv}^{c'})}{2-\sum_{c'\in L}(1-x_{uv}^{c'})}+\left(1-\frac{\alpha}{2}\right)\cdot b(uv)
            \leq \frac{\alpha}{2}\cdot\frac{2x_{uv}^{c}}{1+x_{uv}^{c}}+\left(1-\frac{\alpha}{2}\right)\cdot b(uv)=\alpha x_{uv}^{c}.
        \end{align*}
        \item $\phi(uv)=\gamma$
        
        Let $h(x)=\frac{(3-2x)x}{2(2-x)}\:(x\in [0,1])$.
        Then $h$ is an increasing function since $h'(x)=\frac{(1-x)(3-x)}{(2-x)^2}\geq 0$ in the domain.
        Since $0\leq 1-x_{uv}^c\leq \sum_{c'\in L}(1-x_{uv}^{c'})\leq 1$, 
        \[\frac{1-\alpha/2}{\alpha}\cdot b(uv)=h(1-x_{uv}^c)\leq h\left(\sum_{c'\in L}(1-x_{uv}^{c'})\right)=\frac{(3-2\cdot \sum_{c'\in L}(1-x_{uv}^{c'}))\cdot \sum_{c'\in L}(1-x_{uv}^{c'})}{2(2-\sum_{c'\in L}(1-x_{uv}^{c'}))}.\]
        Therefore,
        \begin{align*}
            \frac{\alpha}{2}\cdot\frac{\sum_{c\in L}(1-x_{uv}^c)}{2-\sum_{c\in L}(1-x_{uv}^c)}+\left(1-\frac{\alpha}{2}\right)\cdot b(uv)&\leq \frac{\alpha}{2}\cdot \frac{4-2\cdot \sum_{c'\in L}(1-x_{uv}^{c'})}{2-\sum_{c'\in L}(1-x_{uv}^{c'})}\cdot \sum_{c'\in L}(1-x_{uv}^{c'})\\
            &= \alpha\cdot \sum_{c'\in L}(1-x_{uv}^{c'}).
        \end{align*}
        
        \item $\phi(uv)\in L\backslash \{c\}$

        Let $g(x)=\frac{x^2}{1+x}\:(x\in [0,1])$.
        Then $g$ is an increasing function since $g'(x)=\frac{x(2+x)}{(1+x)^2}\geq 0$ in the domain.
        Since $0\leq \max\left\{\frac{1}{2}, 1 - x_{uv}^c, 1 - x_{vw}^c, 1 - x_{wu}^c \right\}\leq x_{uv}^{\phi(uv)}$~\citep{XiuHTCH22},
        \[\frac{1-\alpha/2}{\alpha}\cdot b(uv)=g\left(\max\left\{\frac{1}{2}, 1 - x_{uv}^c, 1 - x_{vw}^c, 1 - x_{wu}^c \right\}\right)\leq g(x_{uv}^{\phi(uv)})=\frac{{x_{uv}^{\phi(uv)}}^2}{1+x_{uv}^{\phi(uv)}}.\]
        Therefore,
        \begin{align*}
            \frac{\alpha}{2}\cdot \frac{2x_{uv}^{\phi(uv)}-\sum_{c'\neq \phi(uv)}(1-x_{uv}^{c'})}{2-\sum_{c'\in L}(1-x_{uv}^{c'})}+\left(1-\frac{\alpha}{2}\right)\cdot b(uv)
            &\leq \alpha\cdot \frac{x_{uv}^{\phi(uv)}+{x_{uv}^{\phi(uv)}}^2}{1+x_{uv}^{\phi(uv)}}\\
            &=\alpha\cdot x_{uv}^{\phi(uv)}.
        \end{align*}
    \end{enumerate}
    Therefore, the sufficient condition for the approximation factor of $\alpha+\varepsilon$ is satisfied.
\end{proof}

Now, it is sufficient to show that the inequality~\eqref{eq:triangle} is valid.
\begin{definition}
    \[ALG(uvw):=e.cost_w(uv)+e.cost_u(vw)+e.cost_v(wu),\]
    \[LP(uvw):=e.lp_w(uv)+ e.lp_u(vw)+ e.lp_v(wu).\]
\end{definition}
\begin{lemma}
    The ineqaulity~\eqref{eq:triangle} is true with rounding functions given as \eqref{eq:round} and $\alpha=\frac{18}{11}\approx 1.64$.
\end{lemma}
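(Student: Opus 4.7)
The plan is to carry out a complete finite case analysis on the edge-sign pattern of the triangle $(u,v,w)$ with respect to the pivot color $c$. Each edge is labeled $+$, $-$, or $\circ$ according to whether it lies in $E^+$, $E^-$, or $E^\circ$, giving $27$ patterns that collapse to $10$ equivalence classes under permutation of $\{u,v,w\}$. Under the greedy rounding~\eqref{eq:round}, every $p_{ab}$ is the $\{0,1\}$-indicator of ``$ab\in E^+$'', so each term $e.cost_\cdot(\cdot\cdot)$ becomes a deterministic $0/1$ value determined by the sign pattern alone, and $ALG(uvw)$ is read off by enumeration. A short check gives $ALG = 0$ for the six classes $(+,+,+)$, $(-,-,-)$, $(\circ,\circ,\circ)$, $(-,-,\circ)$, $(-,\circ,\circ)$, and $(+,-,-)$, so \eqref{eq:triangle} is vacuous there. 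The nontrivial classes are $(+,+,-)$, $(+,+,\circ)$, $(+,-,\circ)$, and $(+,\circ,\circ)$, with $ALG=3,3,1,2$ respectively.

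Writing $a=x_{uv}^c$, $b=x_{vw}^c$, $d=x_{wu}^c$, $g(x)=x^2/(1+x)$, $h(x)=(1+2x)(1-x)/(2(1+x))$, and $M=\max\{1/2,\,1-a,\,1-b,\,1-d\}$, and using the identity $\alpha/(1-\alpha/2)=9$ at $\alpha=18/11$, a direct computation of $LP(uvw)$ reduces the four remaining verifications to
\begin{align*}
(+,+,-):&\quad g(a)+g(b)+h(d)\ \geq\ 1/3,\\
(+,+,\circ):&\quad g(a)+g(b)+g(M)\ \geq\ 1/3,\\
(+,-,\circ):&\quad h(b)+g(M)\ \geq\ 1/9,\\
(+,\circ,\circ):&\quad 2g(M)\ \geq\ 2/9,
\end{align*}
subject to $a,b,d\in[0,1]$ and the triangle inequalities $a+b\geq d$ and its symmetric versions. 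The last two are immediate from $M\geq 1/2$ and $g(1/2)=1/6$. For $(+,+,\circ)$, I would split on whether $\min\{a,b\}\leq 1/2$ (in which case $M\geq 1-\min\{a,b\}$ and the convex identity $g(x)+g(1-x)\geq 2g(1/2)=1/3$ closes the case) or $\min\{a,b\}>1/2$ (in which case $g(a)+g(b)>2g(1/2)=1/3$ already suffices).

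The $(+,+,-)$ case is the binding one and pins down the constant $\alpha=18/11$. Since $h$ is decreasing and $g$ is convex and increasing (one checks $g''(x)=2/(1+x)^3>0$ and $h'(x)=-g'(x)$), the minimum over $a,b,d\in[0,1]$ with $d\leq\min\{a+b,1\}$ is attained on the face where $d$ is as large as possible. If $a+b\leq 1$, taking $d=a+b$ and $a=b=s/2$ by convexity in $(a,b)$ reduces the problem to minimizing $\phi(s)=2g(s/2)+h(s)$ over $s\in[0,1]$; the identity $\phi'(s)=g'(s/2)-g'(s)$ is strictly negative on $(0,1]$ because $g'$ is strictly increasing, so $\phi$ attains its minimum at $\phi(1)=2g(1/2)+h(1)=1/3$. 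If instead $a+b>1$, taking $d=1$ kills $h(d)$ and leaves the convex minimization of $g(a)+g(b)$ on $a+b\geq 1$, which again yields $2g(1/2)=1/3$ at $a=b=1/2$. Thus the minimum is exactly $1/3$, attained at $a=b=1/2$, $d=1$.

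The main obstacle I anticipate is the $\circ$-edge handling, because the charging factor depends globally on the triangle through the $\max$ in $M$; I would address this by sub-casing on which of the four arguments of $\max$ is active and reducing each piece to a monotone optimization on a face of the feasible polytope. A secondary care is verifying that the tight configuration $a=b=1/2$, $d=1$ is actually LP-feasible with $u,v,w\in V_c$, which it is, since membership in $V_c$ only bounds the vertex variables $x_u^c,x_v^c,x_w^c<1/2$ and not the edge variables; this is exactly what makes $(+,+,-)$ at this corner the binding triangle that forces $\alpha/(1-\alpha/2)\geq 9$, i.e., $\alpha\geq 18/11$. Plain \eqref{cond:CCC_metric} is enough throughout, and the strengthened \eqref{cond:CCC_metric_strong} is not needed here.
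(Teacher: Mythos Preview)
Your proof is correct and follows the same approach as the paper: the identical four nontrivial sign patterns, handled via convexity of $g$, monotonicity of $h$, and the triangle inequality $d\le a+b$. Your treatment is in places slightly cleaner---the identity $h'=-g'$ in $(+,+,-)$, the direct bound $g(x)+g(1-x)\ge 2g(1/2)$ in $(+,+,\circ)$, and the (correct) omission of the $g(a)$ term from $LP$ in $(+,-,\circ)$ and $(+,\circ,\circ)$, which the paper includes but never actually uses---but the route is essentially identical.
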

\begin{proof}
    We first provide a proof for some trivial cases.
    \begin{claim}
        If $ALG=0$, then the inequality is valid.
    \end{claim}
    \begin{proof}
        Since $b(uv)\geq 0$ and $(1-p_{uw}p_{vw})\geq 0$, 
        $e.lp_w(uv)\geq 0$.
        Likewise, other two terms are also nonnegative, hence $LP\geq 0=ALG$.
    \end{proof}
    We can check that $ALG=0$ unless the configuration of the sign is one of $(+,+,-),\,(+,+,\circ),\,(+,-,\circ),\,(+,\circ,\circ)$, since the value of the rounding function $f$ is constant---either 0 or 1---whenever the sign is fixed.
    Therefore, it is now sufficient to check on the remaining 4 cases.

    Define $(x,y,z):=(x_{uv},x_{vw},x_{wu})$. Recall that $h(x)=\frac{(3-2x)x}{2(2-x)}$ and $g(x)=\frac{x^2}{x+1}$ are increasing functions, hence $h(1-x)=\frac{(1+2x)(1-x)}{2(1+x)}$ is a decreasing function.
    \paragraph{$(+,+,-)$ Triangles}
    The values of $ALG$ and $LP$ are as follows:
    \[ALG=3,\quad LP=9\cdot\left(\frac{x^2}{1+x}+\frac{y^2}{1+y}+\frac{(1+2z)(1-z)}{2(1+z)}\right).\]
    We first minimize $LP$ when $x+y$ is fixed.
    The function $g$ is convex since $g''(x)=\frac{2}{(1+x)^3}\geq 0$.
    Therefore, $g(x)+g(y)\geq 2g\left(\frac{x+y}{2}\right)$, indicating that $LP$ is minimized at $x=y$.

    Now set $y=x$. Since $h(1-x)$ is a decreasing function, $LP$ is minimized at $z=\min\{2x,1\}$.
    There are two cases with respect to the value of $x$.
    \begin{enumerate}
        \item $x<\frac{1}{2}$
        
        $LP$ is minimized at $(x,x,2x)$. Therefore,
        \begin{align*}
            LP\geq 9\cdot (2g(x)+h(1-2x))=9\cdot \left(\frac{(1-2x)(1+5x)}{6(1+x)(1+2x)}+\frac{1}{3}\right)\geq 3 = ALG.
        \end{align*}
        \item $x\geq \frac{1}{2}$

        $LP$ is minimized at $(x,x,1)$. Therefore,
        \begin{align*}
            LP\geq 9\cdot 2\cdot g(x)\geq 9\cdot 2\cdot\frac{1}{6}=3=ALG.
        \end{align*}
    \end{enumerate}
    \paragraph{$(+,+,\circ)$ Triangles}
    The values of $ALG$ and $LP$ are as follows:
    \[ALG=3,\quad LP=9\cdot\left(\frac{x^2}{1+x}+\frac{y^2}{1+y}+\frac{\max\{\frac{1}{2},1-x,1-y,1-z\}^2}{1+\max\{\frac{1}{2},1-x,1-y,1-z\}}\right).\]
    As in the previous case, set $y=x$ since $LP$ is minimized at $x=y$ by the convexity of $g$.
    Since $g$ is an increasing function, $LP$ is minimized when $\max\{\frac{1}{2},1-x,1-y,1-z\}$ is minimized, thus at $z=\min\{2x,1\}$.

    In summary, $LP$ is minimized at $(x,x,\min\{2x,1\})$ with a value
    \[LP\geq 9\cdot \left(2g(x)+g\left(\max\left\{\frac{1}{2},1-x,1-x,1-2x\right\}\right)\right)=9\cdot \left(2g(x)+g\left(\max\left\{\frac{1}{2},1-x\right\}\right)\right).\]
    There are two cases with respect to the value of $x$.
    \begin{enumerate}
        \item $x<\frac{1}{2}$
        \begin{align*}
            LP\geq 9\cdot (2g(x)+g(1-x))=9\cdot \frac{1-x+3x^2-x^3}{(1+x)(2-x)}\geq 3 = ALG.
        \end{align*}
        To check the last inequality,
        \[3(1-x+3x^2-x^3)-(1+x)(2-x)=1-4x+10x^2-3x^3\]
        is minimized at $x=2/9$ with value $139/243\geq 0$,
        since the derivative is $-(2-x)(2-9x)$.
        \item $x\geq \frac{1}{2}$
        \begin{align*}
            LP\geq 9\cdot (2g(x)+g(1/2))\geq 9\cdot3\cdot \frac{1}{6}=\frac{9}{2}\geq 3 = ALG.
        \end{align*}
    \end{enumerate}
    \paragraph{$(+,-,\circ)$ Triangles}
    The values of $ALG$ and $LP$ are as follows:
    \[ALG=1,\quad LP=9\cdot\left(\frac{x^2}{1+x}+\frac{(1+2y)(1-y)}{2(1+y)}+\frac{\max\{\frac{1}{2},1-x,1-y,1-z\}^2}{1+\max\left\{\frac{1}{2},1-x,1-y,1-z\right\}}\right).\]
    Since $g$ is an increasing function,
    \begin{align*}
        LP\geq 9\cdot \frac{1/4}{1+1/2}=1.5 \geq 1 = ALG.
    \end{align*}
    \paragraph{$(+,\circ,\circ)$ Triangles}
    The values of $ALG$ and $LP$ are as follows:
    \[ALG=2,\quad LP=9\cdot\left(\frac{x^2}{1+x}+2\cdot \frac{\max\{\frac{1}{2},1-x,1-y,1-z\}^2}{1+\max\{\frac{1}{2},1-x,1-y,1-z\}}\right).\]
    As in the previous case,
    \begin{align*}
        LP \geq 9\cdot 2 \cdot \frac{1/4}{1+1/2} = 3 \geq 2 = ALG.
    \end{align*}

    This completes our case analysis.
\end{proof}

This finalizes the proof of Theorem~\ref{thm:final}.
\section{Construction of Near-Optimal Solution to Chromatic Cluster LP}
\label{sec:solution}
In this section, we describe the whole procedure of constructing the near-optimal solution to \eqref{eq:CLP-CCC}.
\subsection{Preclustering}
\label{subsec:preclust}
We first provide a chromatic version of preclustering, stated in the following Theorem:
\begin{theorem}[Chromatic Extension of Theorem 14 of \citep{Cao2024}]
\label{thm:preclustering}
    There exists an algorithm that generates $(\Phi^p,\,E^2)$ such that
    \begin{itemize}
        \item $\exists \Phi_1^*$ s.t.\ $\obj(\Phi_1^*)=(1+\varepsilon)\opt\,\Phi_1^*(u,v)=c\in L\implies uv\in E^{2;c}$, and
        \item $|E^2|\leq O\left(\frac{1}{\varepsilon^2}\right)\cdot \opt$.
    \end{itemize}
    in $O(\textnormal{poly}(n,\frac{1}{\varepsilon},L))$ time.
\end{theorem}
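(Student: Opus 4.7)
The plan is to extend the monochromatic preclustering theorem (Theorem 14 of \citep{Cao2024}, as also refined in \citep{cohenaddad2023preclustering,Cao2025}) to the chromatic setting by a reduction followed by color decoration. First I would build the auxiliary Correlation Clustering instance $G_{\mathrm{CC}} = (V, E^+, E^-)$ with $E^+ := \phi^{-1}(L)$ and $E^- := \phi^{-1}(\gamma)$, and observe that any optimal chromatic clustering $\Phi^*$ induces a non-chromatic partition whose CC cost on $G_{\mathrm{CC}}$ is at most $\opt$, since every separated colored edge and every merged $\gamma$-edge is already a CCC-disagreement. Applying the non-chromatic preclustering to $G_{\mathrm{CC}}$ then produces a family of atoms $\mathcal{K}$ and an admissible-edge set $E^{\mathrm{adm}}$ with $|E^{\mathrm{adm}}| = O(1/\varepsilon^2) \cdot \opt$, in $\mathrm{poly}(n, 1/\varepsilon)$ time.

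Next I would color-decorate the output. Each atom $K \in \mathcal{K}$ receives a color $\chi(K) \in L$ chosen to minimize the internal disagreement $|\{uv \in \binom{K}{2} : \phi(uv) \neq \chi(K)\}|$, and this defines $\Phi^p$. For an admissible edge $uv \in E^{\mathrm{adm}}$ I would read the set of compatible colors off the atom membership of its endpoints: $\{\chi(K_u)\} \cap \{\chi(K_v)\}$ when both endpoints lie in atoms, $\{\chi(K)\}$ when exactly one endpoint lies in an atom $K$, and all of $L$ when both are singletons. Setting $E^{2;c}$ to the admissible edges compatible with $c$ yields the desired color-indexed family. Every admissible edge incident to at least one atom contributes to at most one $E^{2;c}$, so the only potential $|L|$-blow-up comes from singleton-singleton admissible edges, whose count is tightly controlled by the preclustering guarantees.

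For the existence of $\Phi_1^*$, I would start from the atom-respecting $(1+\varepsilon)$-approximate non-chromatic clustering $\mathcal{C}$ that the monochromatic preclustering theorem produces: every atom lies in a single cluster of $\mathcal{C}$, and every intra-cluster pair is admissible. I would then upgrade $\mathcal{C}$ to a chromatic clustering by painting each cluster with $\chi(K)$ when it contains an atom $K$, and with a locally optimal color otherwise. The implication $\Phi_1^*(u,v) = c \Rightarrow uv \in E^{2;c}$ follows directly from the construction of the color-compatibility sets, while the cost guarantee $\obj(\Phi_1^*) = (1+\varepsilon)\opt$ combines the $(1+\varepsilon)$ factor of $\mathcal{C}$ with a per-atom exchange against the colors of $\Phi^*$.

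The main obstacle is controlling the additional cost incurred when the greedy color $\chi(K)$ disagrees with the color that $\Phi^*$ assigns to the cluster containing $K$. This is the chromatic analogue of the ``locally near-optimal atom'' property, and resolving it likely requires strengthening the atom criterion so that atoms are dense not merely in colored edges but in a single color: if an atom were substantially bichromatic, then one of the local moves of the preclustering procedure (splitting along the dominant color boundary) would strictly improve its local cost, contradicting its atomicity. Establishing this refined chromatic atomicity is the essential technical content that sits on top of the reduction to \citep{Cao2024}, and it is what I expect the bulk of Section~\ref{subsec:preclust} to address.
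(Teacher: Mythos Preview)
Your reduction-and-decorate approach is genuinely different from the paper's, but it has two gaps that the black box cannot repair.

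First, the atom property does not transfer. In the auxiliary CC instance $G_{\mathrm{CC}}$ every colored edge becomes a $+$~edge, so a set that is a near-perfect monochromatic $+$~clique can be an arbitrarily bad chromatic cluster (take a $+$~clique whose edges use many distinct colors: $\opt_{\mathrm{CC}}=0$, the whole vertex set is a single CC-atom, and the CC preclustering returns \emph{no} admissible edges; yet the CCC optimum splits the clique into pairs). In this situation the chromatic analogue of the key lemma ``$\Phi^*\le\Phi^p$'' fails, and painting the CC-atom with its majority color does not give a $(1+\varepsilon)$-approximate chromatic clustering for small $\varepsilon$. You identify this obstacle, but your proposed fix (``splitting along the dominant color boundary would strictly improve local cost'') is simply false for the CC local-cost that the black box uses: the CC procedure never sees colors, so it has no incentive to split a multi-colored $+$~clique. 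Any fix that makes the marking criterion color-aware is no longer a black-box call; it is exactly the chromatic re-derivation the paper performs, where the atom test in Algorithm~\ref{alg:atom} uses the \emph{chromatic} local cost $d_0(\phi_u,\Phi_u)$ from the outset, and Lemma~\ref{lem:atom} proves that the optimal chromatic clustering respects both the atom partition and the atom colors.

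Second, your size bound on $E^2$ loses a factor of $|L|$. Assigning all colors to every singleton--singleton admissible edge gives $|E^2|\le |L|\cdot|E^{\mathrm{adm}}|=O(|L|/\varepsilon^2)\cdot\opt$, and nothing in the CC preclustering guarantees controls the number of singleton--singleton admissible pairs tightly enough to absorb the $|L|$. The paper avoids this by defining $E^{2;c}$ directly from the color-$c$ normalized degrees $w_u^c$, so that the per-color bound $|N_u^{2;c}|-|K_u|\le O(1/\varepsilon^2)(w_u^c-|K_u|)$ sums over $c$ to $O(1/\varepsilon^2)\cdot\obj(\Phi^p)$ with no $|L|$ factor.

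In short, the paper does not reduce to Theorem~14 of \citep{Cao2024}; it reproves each preclustering lemma in the chromatic setting, with chromatic costs and colored degrees replacing their monochromatic counterparts throughout.
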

The intuition of the atom for the CCC problem is compatible with those for the original CC problem: It is almost a clique with a single color.
The construction of atoms, described in Algorithm~\ref{alg:atom}, also follows in a compatible way in that only vertices with small local cost are intact.
Therefore, the objective value for the output of the algorithm is scaled up to $O(1/\beta^2)$ times the original.
Thus, the following lemma holds since $\beta=0.1$ is a small constant and the input of the algorithm is $O(1)-$approximation clustering.
\begin{algorithm}
\caption{Constructing atoms}
\label{alg:atom}
\begin{algorithmic}
\State \textbf{Input:} $O(1)$ approximation clustering $\Phi$
    \State \textbf{Output:} Clustering by atoms $\Phi^p$
    \State\For{$C\in\Phi$}
    \For{$u\in C$}
    \If{$d_0(\phi_u,\Phi_u)>\frac{\beta}{2}\cdot |C|$} mark $u$
    \EndIf
    \EndFor
    \If{at least $\frac{\beta}{3}$ ratio of $C$ is marked} mark $C$
    \EndIf
    \EndFor
    \State \[\Phi^p(u,v)=\begin{cases}
    \gamma,&u\text{ or }v\text{ is marked},\,u\neq v;\\
    \Phi(u,v),&\text{otherwise.}
    \end{cases}\]
\end{algorithmic}
\end{algorithm}
\begin{lemma}[Chromatic Extension of Lemma 66 of \citep{Cao2024}]
    $\obj(\Phi^p)\leq O(1)\cdot \opt$.
\end{lemma}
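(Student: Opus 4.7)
The plan is to decompose $\obj(\Phi^p) \leq \obj(\Phi) + \Delta$, where $\Delta$ counts the new disagreements introduced by preclustering, i.e., pairs that agreed with $\phi$ in $\Phi$ but disagree in $\Phi^p$. Since $\Phi$ is already $O(1)$-approximate by hypothesis, it suffices to prove $\Delta = O(1)\cdot \opt$, which I will do by charging $\Delta$ against the local-cost signals that triggered the marking of vertices and clusters in Algorithm~\ref{alg:atom}.

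First I would record a structural observation: $\Phi^p(u,v) \neq \Phi(u,v)$ happens only when at least one endpoint is marked, in which case $\Phi^p(u,v) = \gamma$. Consequently, a new disagreement must sit on an edge with $\phi(uv) \in L$, $\Phi(u,v) = \phi(uv)$, and at least one marked endpoint; in particular, its two endpoints lie in a common cluster $C$ of $\Phi$ of color $\phi(uv)$, so every new disagreement is an internal positive edge of some cluster. Let $M_0$ denote the vertices marked in the first loop of Algorithm~\ref{alg:atom}, and let the remaining marked vertices be those whose cluster passed the $\tfrac{\beta}{3}$-threshold in the second loop. I will bound $\Delta$ on the two classes separately.

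For $u \in M_0$ whose cluster $C$ is \emph{not} marked, the edges from $u$ that can become new disagreements all lie in $C$, so there are at most $|C|$ of them, while the marking rule guarantees $d_0(\phi_u,\Phi_u) > \tfrac{\beta}{2}|C|$; summing over such $u$ and absorbing a factor of $2$ for double counting yields a contribution of at most $\tfrac{4}{\beta}\cdot\obj(\Phi)$. For a marked cluster $C$, every vertex of $C$ becomes marked in $\Phi^p$, so the new disagreements $C$ contributes are exactly its correctly-colored internal edges, at most $\binom{|C|}{2}$ of them. On the other hand, $|C \cap M_0| \geq \tfrac{\beta}{3}|C|$ and each such vertex has local disagreement count strictly greater than $\tfrac{\beta}{2}|C|$ in $\Phi$, so the disagreements of $\Phi$ incident to $C \cap M_0$ total at least $\tfrac{\beta^2}{12}|C|^2$ (after halving the sum to correct for double-counting of edges inside $C\cap M_0$). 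Charging the $O(|C|^2)$ breakup cost of each marked cluster against these incident disagreements costs only a $O(1/\beta^2)$ factor, and since the marked clusters are disjoint, the aggregation of incident disagreements over them loses at most another factor of $2$.

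Combining the two contributions gives $\Delta \leq O(1/\beta^2)\cdot \obj(\Phi) = O(1)\cdot \opt$, which concludes the proof. The main technical care is in the double-counting bookkeeping — an edge may be counted by both endpoints or on both sides of a cluster-vs.-vertex charging argument — but these losses are only constants given $\beta = 0.1$. The chromatic aspect does not complicate the accounting: breaking a cluster into singletons destroys agreements exclusively on edges with $\phi(uv)\in L$ and never manufactures new disagreements on $\gamma$-edges, so the argument mirrors the monochromatic version of \citep{Cao2024} almost verbatim.
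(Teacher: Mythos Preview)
Your proof is correct and follows essentially the same approach the paper points to: the paper does not spell out a proof but simply notes that Algorithm~\ref{alg:atom} scales the objective by $O(1/\beta^2)$ and defers to Lemma~66 of \citep{Cao2024}, and your argument is precisely the standard charging of newly broken agreements against the local-cost thresholds that triggered marking, yielding the $O(1/\beta^2)$ blow-up. The case split by whether the containing cluster is marked, the $\tfrac{4}{\beta}\obj(\Phi)$ bound for unmarked clusters, and the $\tfrac{\beta^2}{12}|C|^2$ lower bound on incident disagreements for marked clusters are exactly the ingredients of the original lemma, and your observation that the chromatic structure adds nothing (breaking a cluster into singletons can only hurt on $L$-colored internal edges) matches the paper's remark that the CC argument carries over directly.
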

Next, we prove that not only the optimal clustering can be constructed without breaking atoms, but also the color of non-singleton atoms stays intact.
\begin{lemma}[Chromatic Extension of Lemma 67 of \citep{Cao2024}]
\label{lem:atom-1}
    If $K\in \Phi^p$ is non-singleton, then $d_0(\phi_u,\Phi_u^p)<\beta|K|$ for all $u\in K$.
\end{lemma}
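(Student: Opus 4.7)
The plan is to leverage the fact that a non-singleton $K \in \Phi^p$ must consist entirely of unmarked vertices, which forces $K$ to sit inside a single, unmarked cluster $C \in \Phi$. First I would confirm from the definition of $\Phi^p$ in Algorithm~\ref{alg:atom} that any marked vertex $u$ satisfies $\Phi^p(u,v) = \gamma$ for every $v \neq u$ and is thus a singleton in $\Phi^p$; so $|K| \geq 2$ implies every $u \in K$ is unmarked, and $K$ lies within one $\Phi$-cluster $C$. Under the natural interpretation that marking a cluster amounts to marking all its vertices, $C$ itself is unmarked, so strictly fewer than $\tfrac{\beta}{3}$ of its vertices are marked.

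This structural observation gives two immediate inequalities: $|K| \geq (1 - \beta/3)\,|C|$, hence $|C| \leq \tfrac{|K|}{1 - \beta/3}$; and $d_0(\phi_u, \Phi_u) \leq \tfrac{\beta}{2}\,|C|$ for every $u \in K$, since an unmarked vertex fails the marking threshold. Next I would compare $\Phi_u$ with $\Phi_u^p$ vertex by vertex: for $v \notin C$ both assign $\gamma$; for unmarked $v \in C$ both assign the color of $C$; and they differ only on the marked vertices of $C$, of which there are at most $\tfrac{\beta}{3}\,|C|$. The Hamming distance $d_0$ satisfies the triangle inequality, so
\[
d_0(\phi_u, \Phi_u^p) \;\leq\; d_0(\phi_u, \Phi_u) + d_0(\Phi_u, \Phi_u^p) \;\leq\; \tfrac{\beta}{2}|C| + \tfrac{\beta}{3}|C| \;=\; \tfrac{5\beta}{6}\,|C|.
\]

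Chaining with $|C| \leq |K|/(1 - \beta/3)$ gives $d_0(\phi_u, \Phi_u^p) \leq \tfrac{5\beta}{2(3-\beta)}\,|K|$, and this is strictly below $\beta|K|$ exactly when $\beta < 1/2$, which holds for $\beta = 0.1$ as fixed in the paper.

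The only real subtlety is bookkeeping around the marking step: I need the interpretation of Algorithm~\ref{alg:atom} to be self-consistent, specifically that a marked cluster contributes no non-singleton to $\Phi^p$ (either because cluster-marking propagates to its vertices, or because the non-singleton $K$ already lives within an unmarked cluster by the vertex argument). Once that is pinned down, the three-way comparison between $\phi_u$, $\Phi_u$, and $\Phi_u^p$ isolates all discrepancies to marked vertices of $C$, and the rest is elementary counting with the two bounds $\beta/2$ and $\beta/3$ supplied by Algorithm~\ref{alg:atom}.
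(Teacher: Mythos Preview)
Your proposal is correct and follows essentially the same argument as the paper: both identify that a non-singleton $K$ is exactly the unmarked part of an unmarked cluster $C\in\Phi$, bound $d_0(\phi_u,\Phi_u^p)\le d_0(\phi_u,\Phi_u)+|C\setminus K|\le \frac{\beta}{2}|C|+\frac{\beta}{3}|C|$, and finish via $|C|\le |K|/(1-\beta/3)$ and the numerical check $\frac{5\beta}{6(1-\beta/3)}<\beta$ for $\beta<1/2$. The only cosmetic difference is that you phrase $|C\setminus K|$ as ``the marked vertices of $C$'' and invoke the triangle inequality for $d_0$ explicitly.
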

\begin{proof}
    Since the size of the original cluster $C\in \Phi$ containing $K$ is at most $|K|/(1-\beta/3)$,
    \[d_0(\phi_u,\Phi_u^p)\leq d_0(\phi_u,\Phi_u)+|C\backslash K|\leq \frac{\beta}{2}\cdot |C|+\frac{\beta}{3}\cdot |C|\leq \frac{5\beta}{6}\cdot \frac{|K|}{1-\beta/3}<\beta |K|.\]
\end{proof}
\begin{lemma}[Chromatic Extension of  Lemma 68 of \citep{Cao2024}]
    \label{lem:atom}
    For any optimal cluster $\Phi^*$, $\Phi^*\leq \Phi^p$.
\end{lemma}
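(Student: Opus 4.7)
The plan is to prove the lemma by a cost-non-increasing exchange argument, so that any optimal $\Phi^*$ may be taken to satisfy $\Phi^*\leq\Phi^p$. Unpacking the pointwise order on $L\cup\{0,\gamma\}$ from Figure~\ref{fig:order}, a violation of $\Phi^*\leq\Phi^p$ means that some non-singleton atom $A$ of $\Phi^p$ with color $c$ contains a pair $u,v\in A$ with $\Phi^*(u,v)\neq c$; equivalently, either $A$ is split across several $\Phi^*$-clusters, or $A$ lies in one $\Phi^*$-cluster whose color differs from $c$.

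For such a violating atom $A$, I would construct $\Phi^{**}$ by \emph{extracting} $A$: remove every vertex of $A$ from its current $\Phi^*$-cluster and install $A$ as a fresh cluster of color $c$, keeping all other memberships and colors. The key identity is that for each $u\in A$, $\Phi^{**}_u$ coincides with $\Phi^p_u$, since both label $A\setminus\{u\}$ by $c$ and $V\setminus A$ by $\gamma$. Because $\Phi^*$ and $\Phi^{**}$ differ only on pairs touching $A$, Proposition~\ref{prop:obj_diff} with $K=A$ yields
\[\obj(\Phi^*)-\obj(\Phi^{**})=\tfrac{1}{2}\sum_{u\in A}\bigl[d_0^A(\phi_u,\Phi^*_u)-d_0^A(\phi_u,\Phi^p_u)\bigr].\]
Lemma~\ref{lem:atom-1} bounds each $d_0^A(\phi_u,\Phi^p_u)\leq 2d_0(\phi_u,\Phi^p_u)<2\beta|A|$ (outside-$A$ mismatches are double-counted but themselves bounded by the total), so the subtracted sum is at most $2\beta|A|^2$. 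Applying the triangle inequality for $d_0^A$ to get $d_0^A(\phi_u,\Phi^*_u)\geq d_0^A(\Phi^*_u,\Phi^p_u)-d_0^A(\phi_u,\Phi^p_u)$ then reduces the cost comparison to the purely combinatorial lower bound $\sum_{u\in A}d_0^A(\Phi^*_u,\Phi^p_u)\geq 4\beta|A|^2$.

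The main obstacle is certifying this combinatorial lower bound across all shapes of violation. Writing $a_i=|A\cap C_i|$ and $s_i=|C_i|$ for the $\Phi^*$-clusters meeting $A$, a direct count gives $\sum_{u\in A}d_0^A(\Phi^*_u,\Phi^p_u)=\sum_i[\mathrm{color}(C_i)\neq c]\,a_i(a_i-1)+(|A|^2-\sum_i a_i^2)+2\sum_i a_i(s_i-a_i)$. For the wrong-color case ($t=1$) this contributes $|A|(|A|-1)$, and for non-degenerate splits the middle term contributes $\Omega(|A|^2)$; both dominate $4\beta|A|^2=0.4|A|^2$ when $\beta=0.1$ and $|A|$ is not too small. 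The genuinely delicate sub-case is when exactly one vertex of $A$ is peeled into a wrong cluster while the rest of $A$ already sits in a correctly-colored $\Phi^*$-cluster $C_1$: there I would replace the all-or-nothing extraction by moving just the displaced vertex $u'$ into $C_1$, apply Proposition~\ref{prop:obj_diff} with $K=\{u'\}$, and use Lemma~\ref{lem:atom-1} at $u'$ (noting that $\phi_{u'}$ is already close to $c$ on $A\setminus\{u'\}$ and to $\gamma$ outside $A$) to show this single-vertex relocation is likewise cost-non-increasing. Iterating these two modifications across all non-singleton atoms of $\Phi^p$ delivers an optimal $\Phi^*$ with $\Phi^*\leq\Phi^p$.
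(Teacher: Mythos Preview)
Your overall exchange-argument strategy matches the paper's, but the case analysis has a real gap. The combinatorial lower bound $\sum_{u\in A}d_0^A(\Phi^*_u,\Phi^p_u)\geq 4\beta|A|^2$ fails far more broadly than the single ``one vertex displaced'' case you flag. Take $a_1=|A|-k$ with $C_1$ correctly colored and $s_1=a_1$, together with $k$ correctly-colored singleton stragglers: your displayed sum reduces to $|A|^2-\sum_i a_i^2 = 2k|A|-k^2-k$, which is below $0.4|A|^2$ whenever $k\lesssim 0.2|A|$, not only when $k=1$. So ``extract the whole atom'' does not cover what you call non-degenerate splits, and you give no modification for $2\leq k\lesssim 0.2|A|$.

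Even for $k=1$, your single-vertex-move analysis is incomplete: you implicitly assume $C_1=A\setminus\{u'\}$, so that after the move $\Phi^{**}_{u'}=\Phi^p_{u'}$. When $C_1\supsetneq A\setminus\{u'\}$, moving $u'$ into $C_1$ creates $|C_1\setminus A|$ new intra-cluster pairs of color $c$, and Lemma~\ref{lem:atom-1} at $u'$ gives no control over $|C_1\setminus A|$. The paper closes exactly this gap with a three-step decomposition: first use the extract-$A$ move only to establish that some cluster $C$ has $|A\cap C|\geq\tfrac{2}{3}|A|$; then show $\Phi^*(C)=\Phi^p(A)$ by extracting $A\cap C$; finally, to prove $A\subseteq C$, bound the color-$c$ density between $A\cap C$ and $C\setminus A$ (via the ``otherwise split $A\cap C$ off'' argument), use this together with Lemma~\ref{lem:atom-1} to \emph{choose} a specific $k\in A\setminus C$ with few bad edges, and only then move that $k$ into $C$. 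Your two moves are essentially the first and a special case of the last of these three steps; without the middle step and the careful choice of vertex, the argument does not close.
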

\begin{proof}
    If the atom $K\in\Phi^p$ is singleton (i.e., $K=\{u\}$), since $\Phi^p_u=\gamma \cdot 1_{V\backslash\{u\}}$, $\Phi^*_u\leq \Phi^p_u$.

    Now suppose $K$ is non-singleton. If $|K\cap C|<\frac{2}{3}|K|$ for all $C\in \Phi^*$, consider constructing a cluster $K$ from $\Phi^*$; formally,
    \[\Phi'(u,v)=\begin{cases}
        \Phi^p(u,v),&u,v\in K,\,u\neq v;\\
        \gamma,&(u\in K,\,v\not\in K)\text{ or }(u\not\in K,\,v\in K);\\
        \Phi^*(u,v),&\text{otherwise}.
    \end{cases}\]
    Then for all $u\in K$,
    \[
        d_0^K(\phi_u,\Phi^*_u)-d_0^K(\phi_u,\Phi'_u)\geq \frac{|K|}{3}-2\beta |K| >0
    \]
    due to Proposition~\ref{prop:d0diff}, Lemma~\ref{lem:atom-1}, and the definition of $d_0^K$.
    Therefore, by Proposition~\ref{prop:obj_diff},
    \[\obj(\Phi^*)-\obj(\Phi')=\frac{1}{2}\sum_{u\in K}d_0^K(\phi_u,\Phi^*_u)-d_0^K(\phi_u,\Phi'_u)>0,\]
    which contradicts with the optimality of $\Phi^*$.

    Now set $C\in\Phi^*$ as a unique cluster satisfying $|K\cap C|\geq \frac{2}{3}|K|$. If $\Phi(C)\neq \Phi^p(K)$, consider splitting $K\cap C$ from $C$ and assigning the color $\Phi^p(K)$ to it; formally,
    \[\Phi'(u,v)=\begin{cases}
        \Phi^p(u,v),&u,v\in K\cap C,\,u\neq v;\\
        \gamma,&(u\in K\cap C,\,v\not\in K\cap C)\text{ or }(u\not\in K\cap C,\,v\in K\cap C);\\
        \Phi^*(u,v),&\text{otherwise}.
    \end{cases}\]
    Then for all $u\in K\cap C$,
    \[
     d_0^{K\cap C}(\phi_u,\Phi^*_u)-d_0^{K\cap C}(\phi_u,\Phi'_u)\geq \frac{2}{3}|K|-2\beta |K| >0,
    \]
    using similar arguments as above. Again, this contradicts with the optimality of $\Phi^*$. Thus, $\Phi(C)= \Phi^p(K)$.


    Finally, suppose $K\backslash C\neq \emptyset$.
    Let $c=|C\backslash K|$. Then the number of edges between $K\cap C$ and $C\backslash K$ of color $\Phi^*(C)$ is at least $\frac{c}{2}|K\cap C|$,
    since otherwise splitting $K\cap C$ from $C$ is more optimal.
    By Lemma~\ref{lem:atom-1}, the number of edges between $K\cap C$ and $K\backslash C$ of color $\neq \Phi^*(C)$ is at most $\left(\beta|K| -\frac{c}{2}\right)|K\cap C|$.
    Thus, we can find $k\in K\backslash C$ whose number of edges between $K\cap C$ of color $\neq \Phi^*(C)$ is at most $\min\left\{\beta|K|,\left(\beta|K| -\frac{c}{2}\right)\cdot \frac{|K\cap C|}{|K\backslash C|}\right\}$.
    Consider moving $k$ from its cluster in $\Phi^*$ to $C$; formally,
    \[
    \Phi'(u,v)=\begin{cases}
        \Phi^*(C),&(u=k,\,v\in C)\text{ or }(u\in C,\, v=k);\\
        \gamma,&(u=k,\,v\not\in C\cup\{k\})\text{ or }(u\not\in C\cup\{k\},\, v=k);\\
        \Phi^*(u,v),&\text{otherwise}.
    \end{cases}
    \]
    Then
    \begin{align*}
    d_0^{\{k\}}(\phi_k,\Phi^*_k)-d_0^{\{k\}}(\phi_k,\Phi'_k)&\geq 2\left(|K\cap C|-|K\backslash C|-c-2\min\left\{\beta|K|,\left(\beta|K| -\frac{c}{2}\right)\cdot \frac{|K\cap C|}{|K\backslash C|}\right\}\right.\\
    &\quad\left.-\left(\beta|K|-\min\left\{\beta|K|,\left(\beta|K| -\frac{c}{2}\right)\cdot \frac{|K\cap C|}{|K\backslash C|}\right\}\right)\right)\\
    &=2\left(|K\cap C|-|K\backslash C|-\beta|K|-c-\min\left\{\beta|K|,\left(\beta|K| -\frac{c}{2}\right)\cdot \frac{|K\cap C|}{|K\backslash C|}\right\}\right)\\
    &\geq 2\left(|K\cap C|-|K\backslash C|-2\beta|K|-\frac{2\beta(|K\cap C|-|K\backslash C|)}{|K\cap C|}\cdot |K|\right)\\
    &\geq 2\left(\frac{1}{3}-3\beta\right)|K|>0
    \end{align*}
    when $\beta=0.1$, where the equality condition for the last inequality is $|K\cap C|=\frac{2}{3}|K|$.
    This contradicts with the optimality of $\Phi^*$. Thus, $K\backslash C=\emptyset$ and $K\subseteq C$, providing $\Phi^*\leq \Phi^p$ since $\Phi^*(u,v)=\Phi^p(u,v)$ whenever $\Phi^p(u,v)<\gamma$.
\end{proof}
We denote $K_u$ as an atom containing the vertex $u$, as in the original paper.
We also adopt the notion of normalized degree.
\begin{definition}
    $w_{uv}^c:=\frac{|\phi^{-1}(c)\cap (K_u\times K_v)|}{|K_u||K_v|},\,w_{u,C}^c:=\sum_{v\in C}w_{uv}^c,\,w_u^c:=w_{u,V\backslash K_u}^c+|K_u|\cdot 1_{\Phi^p(K_u)\leq c}$.
\end{definition}

Now we construct admissible edges. Admissible edges also have a color; hence the analysis can be done color-wise.
Consider Algorithm~\ref{alg:phi1}, which continuously splits the atom from the cluster if such an operation is not relatively expensive, starting from the optimal clustering.
Here, $c\cdot \chi_S:V\rightarrow L\cup\{0,\gamma \}$ is defined as
\[c\cdot \chi_S(v)=\begin{cases}
    c,&v\in S;\\
    \gamma,&\text{otherwise.}
\end{cases}\]
Since $\obj(\Phi_1^*)-\obj(\Phi^*)\leq O(\varepsilon)\cdot \obj(\Phi_1^*)$ after the algorithm, $\Phi_1^*$ satisfies $\obj(\Phi_1^*)=(1+O(\varepsilon))\cdot \opt$ as well as the condition.
\begin{algorithm}
\caption{Conceptual Construction of Near-Optimal Clustering - 1}
\label{alg:phi1}
\begin{algorithmic}
    \State \textbf{Input:} Optimal clustering $\Phi^*$
    \State \textbf{Output:} Near-optimal clustering $\Phi_1^*$ with cost $(1+O(\varepsilon))\opt$\State
    \State $\Phi_1^*\leftarrow \Phi^*$
    \While{$\exists K\in \Phi^p,\,C\in\Phi_1^*$ s.t.\ $K\subsetneq C,\,\newline
    \sum_{u\in K}(d_0^K(\phi_u,\Phi_1^*(C)\cdot \chi_{K})-d_0^K(\phi_u,\Phi_1^*(C)\cdot \chi_{C}))\leq 2\varepsilon\cdot\sum_{u\in K}d_0^K(\phi_u,\Phi_1^*(C)\cdot \chi_{K})$}
    \State $\Phi_1^*(K\times C\backslash K+ C\backslash K\times K)\leftarrow \gamma$
    \EndWhile
\end{algorithmic}
\end{algorithm}

The construction of $E^2$ is stated in the following definition.
\begin{definition}
    \begin{align}
    E^{1;c}&:=\{uv\mid w_u^c>\varepsilon w_v^c\},\,E^1:=\biguplus_{c\in L}E^{1;c};\\
    E^{2;c}&:=\left\{uv\left| K_u\neq K_v,\,\sum_{p\in N_u^{1;c}\cap p\in N_v^{1;c}}w_{up}^cw_{vp}^c>\varepsilon(w_u^c+w_v^c)\right.\right\},\,E^2:=\biguplus_{c\in L}E^{2;c}.
    \end{align}
\end{definition}
\begin{lemma}[Chromatic Extension of  Lemma 70 of \citep{Cao2024}]
    If $\Phi_1^*(u,v)=c\neq \gamma$, then $w_u^c>\varepsilon w_v^c$.
\end{lemma}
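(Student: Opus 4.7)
The plan is to derive the bound from the termination condition of Algorithm~\ref{alg:phi1}. First observe the basic structure: since $\Phi_1^*(u,v) = c \neq \gamma$, by Lemma~\ref{lem:atom} both $K_u$ and $K_v$ lie in a common cluster $C \in \Phi_1^*$ with $\Phi_1^*(C) = c$. Each of $K_u, K_v$ is either a singleton atom (so $\Phi^p(K_\cdot) = 0$) or a non-singleton atom which, again by Lemma~\ref{lem:atom}, must have color $c$. In either case $1_{\Phi^p(K_u) \leq c} = 1_{\Phi^p(K_v) \leq c} = 1$, so $w_u^c = w^c_{u, V\setminus K_u} + |K_u|$ and $w_v^c = w^c_{v, V\setminus K_v} + |K_v|$.

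I would then split into two cases. If $K_u = K_v =: K$, then since $w^c_{uv'}$ depends only on the atom pair $(K_u, K_{v'})$, we have $w^c_{u, V\setminus K} = w^c_{v, V\setminus K}$, so $w_u^c = w_v^c$. Because $K$ must be non-singleton (as $u \neq v$), $w_v^c \geq |K| \geq 2 > 0$, and the strict inequality $w_u^c > \varepsilon w_v^c$ holds trivially for $\varepsilon < 1$.

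For the main case $K_u \neq K_v$, both atoms are proper subsets of $C$, so the termination condition of the while loop applies to $K_u$. A direct expansion of $d_0^{K_u}$, analogous to the derivation for Proposition~\ref{prop:obj_diff}, gives
\begin{align*}
\sum_{w \in K_u}\bigl(d_0^{K_u}(\phi_w, c\cdot\chi_{K_u}) - d_0^{K_u}(\phi_w, c\cdot\chi_C)\bigr) &= 2|K_u|\bigl(w^c_{u, C\setminus K_u} - w^\gamma_{u, C\setminus K_u}\bigr),\\
\sum_{w \in K_u} d_0^{K_u}(\phi_w, c\cdot\chi_{K_u}) &= |K_u| + 2n_{K_u} + 2L_{K_u},
\end{align*}
where $n_{K_u}$ counts unordered non-$c$ edge pairs inside $K_u$ and $L_{K_u}$ counts non-$\gamma$ edges between $K_u$ and $V\setminus K_u$. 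After termination the first quantity strictly exceeds $2\varepsilon$ times the second, yielding $|K_u|(w^c_{u, C\setminus K_u} - w^\gamma_{u, C\setminus K_u}) > \varepsilon(|K_u| + 2n_{K_u} + 2L_{K_u})$. Combined with the atom property from Lemma~\ref{lem:atom-1}, which forces $w^{\neq\gamma}_{v, V\setminus K_v} = L_{K_v}/|K_v| < \beta|K_v|$ (and hence $w_v^c < (1+\beta)|K_v|$) whenever $K_v$ is non-singleton, the claim $w_u^c > \varepsilon w_v^c$ reduces to showing $|K_u| + w^c_{u, V\setminus K_u} > \varepsilon(1+\beta)|K_v|$. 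I would then split on the relative sizes: if $|K_u| \geq \varepsilon(1+\beta)|K_v|$ the bound is immediate, and otherwise I would use the termination inequality (which constrains $w^c_{u, C\setminus K_u}$ from below by the non-$\gamma$ boundary mass $2\varepsilon L_{K_u}/|K_u|$) to extract a lower bound on the portion of $w^c_{u, V\setminus K_u}$ coming from $K_v$, namely $|K_v|w^c_{uv}$.

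The hard part is the final step in the asymmetric subcase where $|K_u| \ll |K_v|$. The termination condition bounds a sum over all of $C \setminus K_u$, not just $K_v$, so in principle the $c$-mass attested by the condition could be concentrated on other atoms of $C$ rather than on $K_v$. To handle this I would either perform a symmetric argument using the termination condition for $K_v$ to simultaneously lower bound $w^c_{v, K_u}$, or argue via the singleton case $K_v = \{v\}$ separately (where Lemma~\ref{lem:atom-1} does not directly apply and one must use the algorithm's termination on $\{v\} \subsetneq C$ instead). Getting the constants right so that the two sides of the inequality line up with the $\varepsilon$ slack allowed by Algorithm~\ref{alg:phi1} is the delicate piece, but is routine once the case split above is set up.
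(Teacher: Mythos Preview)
Your setup for the case $K_u = K_v$ is fine. In the main case $K_u \neq K_v$, though, the argument does not close. Routing through Lemma~\ref{lem:atom-1} to get $w_v^c < (1+\beta)|K_v|$ is a detour: it only applies to non-singleton $K_v$, and even then it reduces the goal to comparing $w_u^c$ against $|K_v|$, which is what forces the size-split and the ``asymmetric'' subcase you flag as hard. In that subcase the termination condition for $K_u$ alone genuinely cannot finish the job, for exactly the reason you give---the $c$-mass it certifies into $C\setminus K_u$ need not land on $K_v$. Declaring the remainder ``routine'' glosses over the step that carries all the weight.

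The paper takes a much shorter route that bypasses Lemma~\ref{lem:atom-1} entirely: apply the termination condition of Algorithm~\ref{alg:phi1} symmetrically to \emph{both} $K_u$ and $K_v$, and use the cluster size $|C|$ as a pivot. Bounding the two sides of the stopping inequality~\eqref{eq:phi1} crudely---$\sum_{x\in K} d_0^{K}(\phi_x, c\cdot\chi_K) \geq 2|K|(w^c_{\cdot} - |K|)$ on the right, and the cost gap bounded above by $2|K||C\setminus K|$ (respectively in terms of $w^c_{\cdot,\,C\setminus K}$) on the left---yields $2\varepsilon w_v^c < |C|$ from the instance with $K=K_v$ and $|C| \leq 2w_u^c$ from the instance with $K=K_u$. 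Chaining these gives $w_u^c > \varepsilon w_v^c$ directly, with no case analysis on atom sizes or on whether $K_v$ is a singleton. In other words, the ``symmetric argument using the termination condition for $K_v$'' that you list as a fallback for a corner case is the entire proof, and pivoting on $|C|$ is the idea you are missing.
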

\begin{proof}
    If $K_u=K_v$, then the statement is trivial.

    Otherwise, i.e.,\ $K_u\neq K_v$, there exists $C\in \Phi_1^*$ such that $K_u,K_v\subseteq C$.
    By Algorithm~\ref{alg:phi1}, the following inequality holds.
    \begin{align}\sum_{x\in K_u}(d_0^{K_u}(\phi_x,\Phi_1^*(C)\cdot \chi_{K_u})-d_0^{K_u}(\phi_x,\Phi_1^*(C)\cdot \chi_{C}))> 2\varepsilon\cdot\sum_{x\in K_u}d_0^{K_u}(\phi_x,\Phi_1^*(C)\cdot \chi_{K_u}).\label{eq:phi1}\end{align}

    Each side of the inequality can be bounded in terms of the normalized degree of color $c$ as follows.
    \begin{align}
         \sum_{x\in K_u}d_0^{K_u}(\phi_x,\Phi_1^*(C)\cdot \chi_{K_u})&\geq 2|K_u|(w_u^c-|K_u|)\label{eq:phi1_lb1}\\
         &\geq 0\label{eq:phi1_lb2},
    \end{align}
    \begin{align}
        \sum_{x\in K_u}(d_0^{K_u}(\phi_x,\Phi_1^*(C)\cdot \chi_{K_u})-d_0^{K_u}(\phi_x,\Phi_1^*(C)\cdot \chi_{C}))&\leq 4|K_u|w_{u,C\backslash K_u}^c-2|K_u||C\backslash K_u|\label{eq:phi1_ub1}\\
        &\leq 2|K_u||C\backslash K_u|\label{eq:phi1_ub2}.
    \end{align}
    All of the above are also true if every $K_u$ is replaced by $K_v$.
    Combining inequalities~\eqref{eq:phi1}, \eqref{eq:phi1_lb1} and \eqref{eq:phi1_ub2} on $v$ results in
    \[2\varepsilon(w_v^c-|K_v|)<|C\backslash K_v|,\quad 2\varepsilon w_v^c<|C|,\]
    whereas combining inequalities~\eqref{eq:phi1}, \eqref{eq:phi1_lb2} and \eqref{eq:phi1_ub1} on $u$ results in
    \[0\leq 2w_{u,C\backslash K_u}^c-|C\backslash K_u|,\quad |C|\leq 2w_u^c.\]
    Combining final two inequalities completes the proof.
\end{proof}

All the following arguments can be driven by applying the corresponding lemmas from the original paper, where the edges of color $c$ act as $+$ edges while others act as $-$ edges.
This is possible as the entire argument is based solely on the notion of $+$ degree, the degree of other colors does not affect the correctness of the argument.
Moreover, since Lemma~\ref{lem:atom} and Algorithm~\ref{alg:phi1} implies $\Phi^p(K)=\Phi^*(K)=\Phi^*_1(K)=\Phi^*_1(C)$ if $K\subseteq C\in \Phi_1^*$ and $K$ is non-singleton,
any non-singleton atoms whose color differs from $c$ do not need any admissible edges of color $c$, allowing the excision of such atoms during the analysis with color $c$.
Therefore, proofs for following lemmas are omitted.
\begin{lemma}[Chromatic Extension of  Lemma 72 of \citep{Cao2024}]
    If $\Phi_1^*(u,v)=c\in L$ and $v\not \in K_u$, then $\sum_{p\in N_u^{1;c}\cap  N_v^{1;c}}w_{up}^cw_{vp}^c>\varepsilon(w_u^c+w_v^c)$.
\end{lemma}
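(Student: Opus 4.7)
The plan is to derive this by a direct reduction to the original Lemma 72 of \citep{Cao2024}, exploiting the fact that the argument there only ever uses the notion of $+$-degree and never the identity of the non-$+$ labels on the other edges. First I would fix the color $c$ and declare every edge of color $c$ to play the role of a $+$-edge and every other edge (including those of color $\gamma$) to play the role of a $-$-edge; with this convention the normalized degree $w_u^c$ is exactly the $+$-degree of $u$ in the original paper's bookkeeping, and the definition of $E^{1;c}$ matches the original $E^1$.

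The next step is to verify that this relabeling preserves the structural hypotheses that the original proof draws upon. Lemma~\ref{lem:atom} together with the termination condition of Algorithm~\ref{alg:phi1} jointly imply that whenever a non-singleton atom $K$ lies inside a cluster $C \in \Phi_1^*$, one has $\Phi^p(K) = \Phi^*(K) = \Phi_1^*(K) = \Phi_1^*(C)$. In particular, if $\Phi_1^*(C) = c$, then every non-singleton atom inside $C$ has color $c$, while every atom of a different color sitting in $C$ must be a singleton. I would therefore excise all non-singleton atoms of the wrong color before running the original combinatorial argument; the values $w_u^c,\,w_v^c$ and the inner product $\sum_{p \in N_u^{1;c} \cap N_v^{1;c}} w_{up}^c w_{vp}^c$ are all unaffected by this excision, since those atoms' pairs contribute $0$ to the color-$c$ degree regardless.

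With the reduction in place, the remainder is a transcription of the proof of Cao2024's Lemma 72. It chains together the previous chromatic lemma (which gives that $w_u^c$ and $w_v^c$ are comparable up to a factor of $\varepsilon$), the inequality obtained from the stopping rule of Algorithm~\ref{alg:phi1} applied to $K_u$ and $K_v$ inside their common cluster $C$, and counting bounds analogous to (\ref{eq:phi1_lb1})--(\ref{eq:phi1_ub2}) on how many color-$c$ edges each candidate common neighbor $p$ must contribute to both atoms. Averaging over $p \in C$ and dividing through by $|K_u||K_v|$ then yields the claimed bound $\sum_{p \in N_u^{1;c}\cap N_v^{1;c}} w_{up}^c w_{vp}^c > \varepsilon(w_u^c + w_v^c)$.

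I expect the main obstacle to be purely bookkeeping: keeping the superscript $c$ consistent through every degree bound, and carefully justifying that the excision step above is cost-free even when singletons of other colors act as common neighbors. Once the excision is verified, the combinatorial engine of \citep{Cao2024} transfers verbatim, because every step there is expressed purely in terms of ``$+$-degree'' quantities that correspond exactly to our $w^c$-quantities under the relabeling.
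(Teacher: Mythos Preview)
Your proposal is correct and follows essentially the same approach as the paper: the paper explicitly omits a standalone proof of this lemma, justifying the omission by the very reduction you describe---treat color-$c$ edges as $+$ and all others as $-$, observe that the original argument depends only on $+$-degree, and invoke Lemma~\ref{lem:atom} together with Algorithm~\ref{alg:phi1} to excise non-singleton atoms of the wrong color before transferring Cao et al.'s Lemma~72 verbatim. Your write-up is in fact more detailed than what the paper provides.
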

\begin{lemma}[Chromatic Extension of  Lemma 74 of \citep{Cao2024}]
    $|N_u^{2;c}|-|K_u|\leq O\left(\frac{1}{\varepsilon^2}\right)(w_u^c-|K_u|)$.    
\end{lemma}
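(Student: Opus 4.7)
The plan is to closely parallel the proof of Lemma~74 in \citep{Cao2024}. As the authors note in the paragraph preceding this lemma, by regarding color-$c$ edges as $+$ and all other edges as $-$, and by excising non-singleton atoms of color $\neq c$ (which by Lemma~\ref{lem:atom} combined with Algorithm~\ref{alg:phi1} cannot lie inside any $\Phi_1^*$-cluster of color $c$), the chromatic setting collapses to the single-color degree accounting of the original paper. The argument I sketch below is essentially the original one specialized to color $c$.

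First I would rewrite the defining inequality of $E^{2;c}$, sum it over $v \in N_u^{2;c}$, drop the $w_v^c$ term on the left, and swap the order of summation on the right to obtain
\[
\varepsilon \cdot |N_u^{2;c}| \cdot w_u^c \;\leq\; \sum_{p \in N_u^{1;c}} w_{up}^c \Bigl(\sum_{v\,:\, p \in N_v^{1;c}} w_{vp}^c\Bigr).
\]
I would then bound the inner parenthesized sum by essentially $w_p^c$ (using $\sum_v w_{vp}^c \leq w_p^c$ up to atom-internal bookkeeping), and subsequently bound $\sum_{p \in N_u^{1;c}} w_{up}^c w_p^c$ by $O(1/\varepsilon) \cdot (w_u^c)^2$. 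The latter bound splits according to the two regimes of the $E^{1;c}$ relation: when $w_p^c \leq w_u^c/\varepsilon$ it follows immediately from $\sum_p w_{up}^c \leq w_u^c$; when instead $w_u^c \leq w_p^c/\varepsilon$ one exchanges the roles of $u$ and $p$, bounding $w_{up}^c$ using the atom-averaging structure and double-counting over $K_p$. Putting these together yields $|N_u^{2;c}| \leq O(1/\varepsilon^2) \cdot w_u^c$. The $-|K_u|$ correction on both sides is a bookkeeping step reflecting the trivial admissibility of vertices inside $K_u$ alongside the $|K_u| \cdot 1_{\Phi^p(K_u) \leq c}$ term in the definition of $w_u^c$.

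The main obstacle I anticipate is the second regime of the case split above, where $p$ has color-$c$ degree much larger than $u$; here $w_p^c$ cannot be controlled by $w_u^c$, and the delicate step of the original proof is to instead charge contributions to $p$'s atom and double-count by summing over $u$-like vertices inside each $K_p$. The chromatic reduction described in the first paragraph is precisely what guarantees that this delicate step transfers verbatim to the color-$c$ projection, so that no genuinely new ingredient is required beyond confirming that non-$c$-colored atoms drop out of all the sums involved.
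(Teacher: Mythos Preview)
Your proposal is correct and mirrors the paper's approach exactly: the paper omits the proof entirely, stating that it follows by applying Lemma~74 of \citep{Cao2024} verbatim after treating color-$c$ edges as $+$, all others as $-$, and excising non-singleton atoms of color $\neq c$. Your sketch of the underlying degree-counting argument (summing the $E^{2;c}$-defining inequality, swapping summation, and the two-regime split on $w_p^c$ versus $w_u^c$) is precisely the content of the original lemma that the paper invokes, so there is nothing to add or correct.
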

Since
\[|E^2|=\frac{1}{2}\sum_{u\in V}\sum_{c\in L}(|N_u^{2;c}|-|K_u|)\leq O\left(\frac{1}{\varepsilon^2}\right)\sum_{u\in V}\sum_{c\in L}\sum_{w_u^c-k_u}=O\left(\frac{1}{\varepsilon^2}\right)\cdot \obj(\Phi^p)\leq O\left(\frac{1}{\varepsilon^2}\right)\cdot \opt,\]
this finalizes the proof for Theorem~\ref{thm:preclustering}.
\subsection{Bounded Sub-Cluster LP Relaxation}
\label{subsec:bsclp}
Next, we provide a chromatic version of bounded sub-cluster LP relaxation.
The LP consists of variables $y_S^{s,c}$ that refer to the fractional count of clusters with size $s$, color $c$, containing $S$.
As in the original paper, $s$ and the size of $|S|$ are bounded above by $r=\Theta\left(\frac{1}{\varepsilon^{12}}\right)$. Let $\varepsilon_1=\varepsilon^3$.
\begin{algorithm}
\caption{Conceptual Construction of Near-Optimal Clustering - 2}
\label{alg:phi1_1}
\begin{algorithmic}
\While{$\exists u,\,C\in \Phi_1^*$ s.t.\ $K_u\subsetneq C,\, |K_u|<|C|<|K_u|+\varepsilon_1\cdot |N_u^{2;c}\backslash K_u|$}
\State $\Phi_1^*(K_u\times C\backslash K_u+ C\backslash K_u\times K_u)\leftarrow \gamma$
\EndWhile
\end{algorithmic}
\end{algorithm}
\begin{align}
\text{minimize}\quad&\sum_{\phi(uv)\in L} x_{uv}^{\phi(uv)} +\sum_{\phi(uv)=\gamma}y_{uv}\label{eq:sLP_CCC}\tag{bounded chromatic sub-cluster LP}\end{align}
\begin{align}
\text{subject to}\quad&y_u^c=1-x_u^c&\forall u\in V,\,c\in L\\
&y_u=1&\forall u\in V\\
&y_{uv}^c=1-x_{uv}^c&\forall uv\in \binom{V}{2},\,c\in L\\
&\frac{1}{s}\sum_{u\in V} y_{Su}^{s,c}=y_S^{s,c}&\forall S\subseteq V,\,|S|\leq r,\,s\leq r,\,c\in L\label{cond:sLP_CCC_trans}\\
&y_S^{s,c} \geq 0&\forall S\subseteq V,\,|S|\leq r,\,s\leq r,\,c\in L\\
& y_{uv}^{\Phi^p(K)}=1&\forall u,v\in K\in \Phi^p,\,uv\in \binom{V}{2}\\
& y_{uv}^c=0&c\in L,\,uv\not\in E^{2;c}\\
& y_u^{s,c}=0&\forall u\in V,\,c\in L,\nonumber\\&&s\neq |K_u|,s\leq |K_u|+\varepsilon_1|N_u^{2;c}\backslash K_u|\label{eq:size}\\
&\sum_{T'\subseteq T}(-1)^{|T'|}\cdot y_{S\cup T'}^{s,c}
\in[0,y_S^{s,c}]&\forall s\leq r,\,c\in L,\,S\cap T=\emptyset
\end{align}
Here, any of the omitted superscripts in $y$ indicates the sum of $y_S^{s,c}$ over those superscripts.
All of the constraints appropriately encapsulate the idea of the bounded sub-cluster LP with color.
Since the additional cost incurred after Algorithm~\ref{alg:phi1_1} is at most $O(\varepsilon_1)\cdot |E^2|=O(\varepsilon)\cdot \opt$,
the existence of the near-optimal chromatic clustering $\Phi_1^*$ with cost $(1+O(\varepsilon))\opt$ satisfying inequality~\eqref{eq:size} is still guaranteed.
The running time for solving this LP is $L^{O(1)}n^{O(r)}$.
\subsection{Cluster Sampling}
\label{subsec:sampling}
Using the solution for the bounded sub-cluster LP, we then construct the solution for the cluster LP by computing the ratio from the pool of appropriately sampled clusters.
Here, we provide the procedure for sampling the single cluster (Algorithm~\ref{alg:sampling}).
Let $\varepsilon_\text{rt}=\varepsilon_1^2=\varepsilon^6$.
\begin{algorithm}
\caption{Sampling Single Cluster}
\label{alg:sampling}
\begin{algorithmic}
\State \textbf{Input:} Optimal bounded sub-cluster LP solution $\{y_S^{s,c}\}$
\State \textbf{Output:} Cluster $C$ of color $c$\State
\State Choose $(s,c,u)$ with probability proportional to $\frac{y_u^{s,c}}{s}$
\State Define $y_S'=\frac{y_{Su}^{s,c}}{y_{u}^{s,c}}$
\State $C\leftarrow \textsc{Raghavendra-Tan}(y')$
\end{algorithmic}
\end{algorithm}
\begin{lemma}[\cite{RT12}, Chromatic Extension of Lemma 16 of \citep{Cao2024}]
    There exists an algorithm that samples $C\subseteq V$ such that
\begin{itemize}
    \item (doesn't break atom) $K\in \Phi^p,\,K\cap C\neq \emptyset\implies K\subseteq C$,
    \item $\Pr[v\in C]=y_v',\,\forall v\in V$,
    \item $\frac{1}{|N_u^{2;c}\backslash K_u|^2}\sum_{v,w\in N_u^{2;c}\backslash K_u}|\Pr[v,w\in C]-y_{vw}'|\leq \varepsilon_\textnormal{rt}=\Theta\left(\frac{1}{\sqrt{r}}\right)$
\end{itemize}
    in $n^{O(r)}$ time.
\end{lemma}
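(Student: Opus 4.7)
The plan is to reduce the chromatic sampling lemma to the non-chromatic analogue (Lemma 16 of \citep{Cao2024}), which itself is an application of the Raghavendra--Tan rounding scheme \citep{RT12}. After the outer sampling of the pivot triple $(s,c,u)$ with probability proportional to $y_u^{s,c}/s$, the remaining task is to draw a size-$s$ subset $C \subseteq V$ containing $u$ whose marginals match the conditional quantities $y_S' = y_{Su}^{s,c}/y_{u}^{s,c}$. The first step is to verify that $\{y_S'\}_{u \in S,\, |S| \le r}$ is a valid level-$r$ Sherali--Adams pseudo-distribution on subsets of $V$ of size $s$ containing $u$. Because every constraint of \eqref{eq:sLP_CCC} is indexed by a single color $c$, fixing $c$ decouples the LP into per-color sub-LPs whose constraint system coincides with that of the (colorless) bounded sub-cluster LP; dividing the alternating-sum constraint by the positive number $y_u^{s,c}$ then yields the required non-negativity and upper-bound-by-one conditions on $\{y_S'\}$.

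Second, I would handle the atom-preservation guarantee before invoking the rounding. If $K$ is a singleton the claim is trivial, so suppose $K \in \Phi^p$ is non-singleton with color $\Phi^p(K)$. The LP constraint $y_{uv}^{\Phi^p(K)}=1$ for all $u,v \in K$, combined with the size-and-color restriction \eqref{eq:size} and Lemma~\ref{lem:atom}, forces $y_S^{s,c}=0$ whenever $c=\Phi^p(K)$ and $S$ partially intersects $K$, and also whenever $c \neq \Phi^p(K)$ and $S \cap K \neq \emptyset$. Conditioning preserves these equalities, so in the pseudo-distribution $\{y_S'\}$ one has $y_{vw}'=1$ for $v,w$ inside any atom that intersects the support. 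Since Raghavendra--Tan exactly preserves single-vertex marginals (and therefore exactly preserves integral pairwise marginals equal to $1$), the rounded sample cannot split an atom.

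Third, I would apply the Raghavendra--Tan theorem verbatim to $\{y_S'\}$, which produces a random $C$ in $n^{O(r)}$ time with exact single-vertex marginals $\Pr[v\in C]=y_v'$ and with average pairwise marginal error $O(1/\sqrt{r})$ over any prescribed vertex set, here $N_u^{2;c}\setminus K_u$. Choosing $r=\Theta(1/\varepsilon^{12})$ as in Section~\ref{subsec:bsclp} gives exactly $\varepsilon_{\rm rt}=\varepsilon^6$, matching the statement. The only real obstacle is bookkeeping: one must carefully check that the color-slice of \eqref{eq:sLP_CCC} truly matches the constraint system for which Raghavendra--Tan was originally established, so no modification of the rounding machinery is needed. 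Because the chromatic LP is a disjoint union of per-color sub-problems glued only through the normalization constraints $y_u=1$ and $y_u^c = 1-x_u^c$, which play no role after conditioning on $(s,c,u)$, this verification is routine and the reduction to \citep{Cao2024, RT12} is immediate.
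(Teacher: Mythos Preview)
The paper does not actually prove this lemma; it is stated as a direct citation of \cite{RT12} and the non-chromatic Lemma~16 of \cite{Cao2024}, with no accompanying argument. Your proposal is correct and in fact supplies the reduction that the paper leaves implicit: after conditioning on the sampled triple $(s,c,u)$, the color is fixed and the remaining conditional distribution $\{y_S'\}$ satisfies exactly the same Sherali--Adams-type constraints as in the colorless bounded sub-cluster LP, so the Raghavendra--Tan rounding of \cite{RT12} applies without modification. Your handling of atom preservation via the LP constraints $y_{uv}^{\Phi^p(K)}=1$ and the observation that the only cross-color coupling (the normalization $y_u=1$) disappears after conditioning are precisely the bookkeeping points one would need to spell out; the paper simply takes all of this for granted.
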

We define $\err_{vw\mid u}^{s,c}$ as $|\Pr[v,w\in C]-y_{vw}'|$ conditioned on the choice of $(s,c,u)$. Any of the omitted $s,\,c,$ or $u$ in $\err_{vw}$ indicates the expectation of $\err_{vw\mid u}^{s,c}$ over those variables.
Note that $\Pr[v,w\in C]=0$ if $v\not\in N_u^{2;c}$ or $u\not\in N_u^{2;c}$, thus $\err_{vw\mid u}^{s,c}=0$ in this case.
\begin{proposition}
    $\sum_{(s,c,u)}\frac{y_{u}^{s,c}}{s}=y_\emptyset$.
\end{proposition}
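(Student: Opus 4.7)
The plan is to apply the transition constraint~\eqref{cond:sLP_CCC_trans} in the \eqref{eq:sLP_CCC} with $S=\emptyset$, and then to unfold the notational convention that omitted superscripts in $y$ denote summation. Specifically, for any fixed $s\leq r$ and $c\in L$, constraint~\eqref{cond:sLP_CCC_trans} instantiated at $S=\emptyset$ reads
\begin{equation*}
\frac{1}{s}\sum_{u\in V} y_u^{s,c}=y_\emptyset^{s,c}.
\end{equation*}
So summing both sides over all admissible pairs $(s,c)$ gives
\begin{equation*}
\sum_{(s,c,u)}\frac{y_u^{s,c}}{s}=\sum_{(s,c)} y_\emptyset^{s,c},
\end{equation*}
and the right-hand side equals $y_\emptyset$ by the stated convention that omitted superscripts on $y$ mean one sums over them.

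The main obstacle here is essentially bookkeeping, not technique: one only needs to check that the range of $(s,c,u)$ used in Algorithm~\ref{alg:sampling} matches the range over which \eqref{cond:sLP_CCC_trans} is declared, so that nothing is left out of the sum. Since constraint~\eqref{cond:sLP_CCC_trans} is stated for every $s\leq r$ and $c\in L$, and the sampling in Algorithm~\ref{alg:sampling} chooses $(s,c,u)$ with probability proportional to $y_u^{s,c}/s$ over exactly those indices, the ranges agree and the identity follows in one line. Thus the proposition is really just the $S=\emptyset$ case of the LP's transition constraint plus the notational convention, and no further estimate is required.
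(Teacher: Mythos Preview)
Your proof is correct and takes exactly the same approach as the paper: plug $S=\emptyset$ into constraint~\eqref{cond:sLP_CCC_trans} and sum over the remaining indices. The paper's proof is just the one-line version of what you wrote.
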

\begin{proof}
    Plug $S=\emptyset$ into condition~\eqref{cond:sLP_CCC_trans}.
\end{proof}
\begin{lemma}[Chromatic Extension of Lemma 17 of \citep{Cao2024}]
    $\Pr[v\in C]=\frac{1}{y_\emptyset}$.
\end{lemma}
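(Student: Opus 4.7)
The plan is to expand the probability by conditioning on the triple $(s,c,u)$ chosen in the first line of Algorithm~\ref{alg:sampling}, and then to apply the marginal guarantee of Raghavendra--Tan together with constraint~\eqref{cond:sLP_CCC_trans} specialized at $S=\{v\}$.

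First I would observe that the normalizing constant for the sampling of $(s,c,u)$ is $y_\emptyset$: by the preceding proposition, $\sum_{(s,c,u)} \frac{y_u^{s,c}}{s} = y_\emptyset$, so $\Pr[(s,c,u)\text{ chosen}] = \frac{y_u^{s,c}}{s\cdot y_\emptyset}$. Conditioned on this choice, the Raghavendra--Tan rounding applied to $y'_S = y_{Su}^{s,c}/y_u^{s,c}$ guarantees the marginal
\[
\Pr[v\in C\mid (s,c,u)] \;=\; y'_v \;=\; \frac{y_{uv}^{s,c}}{y_u^{s,c}}.
\]
(When $u=v$, this reads $y_v^{s,c}/y_v^{s,c}=1$, which is consistent since the cluster is guaranteed to contain $u$.)

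Next I would combine these by the law of total probability:
\[
\Pr[v\in C] \;=\; \sum_{(s,c,u)} \frac{y_u^{s,c}}{s\cdot y_\emptyset}\cdot \frac{y_{uv}^{s,c}}{y_u^{s,c}} \;=\; \frac{1}{y_\emptyset}\sum_{(s,c)}\frac{1}{s}\sum_{u\in V} y_{uv}^{s,c}.
\]
Now I apply constraint~\eqref{cond:sLP_CCC_trans} with $S=\{v\}$, which gives $\frac{1}{s}\sum_{u\in V} y_{\{v\}u}^{s,c} = y_v^{s,c}$. Substituting yields
\[
\Pr[v\in C] \;=\; \frac{1}{y_\emptyset}\sum_{(s,c)} y_v^{s,c} \;=\; \frac{y_v}{y_\emptyset}.
\]
Finally, the LP constraints force $y_v = \sum_{c\in L} y_v^c = \sum_{c\in L}(1-x_v^c) = 1$ (using the equation $y_u^c = 1-x_u^c$ together with the chromatic normalization $y_u=1$ from the bounded sub-cluster LP, which itself descends from \eqref{cond:CLP_v}). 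Hence $\Pr[v\in C] = 1/y_\emptyset$, as desired.

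The proof is essentially a bookkeeping exercise; the only mild subtlety is handling the case $u=v$ in the sum $\sum_u y_{uv}^{s,c}$ consistently with the convention $y_{Su}^{s,c} = y_{S\cup\{u\}}^{s,c}$ (so that when $u\in S$ the term coincides with $y_S^{s,c}$). The structural content---the use of the Raghavendra--Tan marginal and the translation identity~\eqref{cond:sLP_CCC_trans}---is directly parallel to Lemma~17 of \citep{Cao2024}, and the chromatic superscript $c$ is simply carried through the computation, which is why no real new difficulty arises.
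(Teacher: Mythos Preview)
Your proof is correct and follows essentially the same approach as the paper's proof, which is the one-line chain
\[\Pr[v\in C]=\frac{\sum_{(s,c,u)}\frac{y_{u}^{s,c}}{s}\cdot \frac{y_{uv}^{s,c}}{y_{u}^{s,c}}}{\sum_{(s,c,u)}\frac{y_{u}^{s,c}}{s}}=\frac{\sum_{(s,c,u)}\frac{y_{uv}^{s,c}}{s}}{y_\emptyset}=\frac{\sum_{(s,c)}y_{v}^{s,c}}{y_\emptyset}=\frac{1}{y_\emptyset}.\]
You have simply unpacked each equality (conditioning on $(s,c,u)$, the Raghavendra--Tan marginal, constraint~\eqref{cond:sLP_CCC_trans} at $S=\{v\}$, and $y_v=1$) in more detail.
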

\begin{proof}
    \[\Pr[v\in C]=\frac{\sum_{(s,c,u)}\frac{y_{u}^{s,c}}{s}\cdot \frac{y_{uv}^{s,c}}{y_{u}^{s,c}}}{\sum_{(s,c,u)}\frac{y_{u}^{s,c}}{s}}=\frac{\sum_{(s,c,u)}\frac{y_{uv}^{s,c}}{s}}{y_\emptyset}=\frac{\sum_{(s,c)}y_{v}^{s,c}}{y_\emptyset}=\frac{1}{y_\emptyset}.\]
\end{proof}
\begin{lemma}[Chromatic Extension of Lemma 18 of \citep{Cao2024}]
    \label{lem:err}
    For any $vw\in \binom{V}{2}$,
    \begin{enumerate}
        \item $\Pr[v\in C,\,w\in C\mid c]\leq \frac{y_{vw}^c}{y_\emptyset^c}+\err_{vw}^c$,
        \item $\Pr[v\in C,\,w\in C]\leq \frac{y_{vw}}{y_\emptyset}+\err_{vw}$.
        \item $\Pr[v\in C]-\Pr[v\in C,\,w\in C\textnormal{ of color }c]\leq \frac{1-y_{vw}^c}{y_\emptyset}+\frac{y_\emptyset^c}{y_\emptyset}\err_{vw}^c$
    \end{enumerate}
\end{lemma}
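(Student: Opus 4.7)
}
The plan is to handle all three inequalities by first pinning down the law of the sampled tuple $(s,c,u)$, then applying the Raghavendra--Tan pairwise guarantee conditionally, and finally marginalizing while using constraint~\eqref{cond:sLP_CCC_trans} to collapse sums over $u$ into the sub-cluster LP variables. Concretely, the sampling weight is $y_u^{s,c}/s$ with normalizer $\sum_{(s,c,u)} y_u^{s,c}/s = y_\emptyset$ (by the proposition just before the lemma, which is the $S=\emptyset$ case of~\eqref{cond:sLP_CCC_trans}); so $\Pr[c\text{ sampled}]=y_\emptyset^c/y_\emptyset$ and $\Pr[(s,u)\mid c]=(y_u^{s,c}/s)/y_\emptyset^c$. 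Conditional on $(s,c,u)$, Raghavendra--Tan gives $\Pr[v,w\in C\mid s,c,u]=y_{vw}'\pm \err_{vw\mid u}^{s,c}$ where $y_{vw}'=y_{uvw}^{s,c}/y_u^{s,c}$.

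First, for part (1), I would write
\[
\Pr[v,w\in C\mid c]\;=\;\sum_{(s,u)}\frac{y_u^{s,c}/s}{y_\emptyset^c}\,\Pr[v,w\in C\mid s,c,u]\;\leq\;\frac{1}{y_\emptyset^c}\sum_{(s,u)}\frac{y_{uvw}^{s,c}}{s}\;+\;\err_{vw}^c,
\]
then apply~\eqref{cond:sLP_CCC_trans} with $S=\{v,w\}$ to replace $\sum_u y_{uvw}^{s,c}$ by $s\cdot y_{vw}^{s,c}$, which cancels the $1/s$ and sums over $s$ to $y_{vw}^c$. Part (2) follows by marginalizing (1) over $c$ with weight $y_\emptyset^c/y_\emptyset$, using linearity of expectation together with the definition $\err_{vw}=\sum_c(y_\emptyset^c/y_\emptyset)\err_{vw}^c$; the $y_\emptyset^c$ factors cancel in the main term and we get $y_{vw}/y_\emptyset+\err_{vw}$.

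For part (3), I would first note that the same chain of equalities used in (1) yields the matching \emph{lower} bound $\Pr[v,w\in C\mid c]\geq y_{vw}^c/y_\emptyset^c-\err_{vw}^c$, by using the inequality $\Pr[v,w\in C\mid s,c,u]\geq y_{vw}'-\err_{vw\mid u}^{s,c}$. In parallel I would compute $\Pr[v\in C,\,c_C=c]=(y_\emptyset^c/y_\emptyset)\cdot\Pr[v\in C\mid c]=y_v^c/y_\emptyset$ (the inner probability unfolds via the same manipulation, using~\eqref{cond:sLP_CCC_trans} with $S=\{v\}$ and the fact that $y_v^c=1$ for $v$ in a non-singleton atom is not needed, only the identity itself), which is consistent with the unconditional $\Pr[v\in C]=1/y_\emptyset$ via $\sum_c y_v^c=y_v=1$. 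Writing
\[
\Pr[v\in C]-\Pr[v,w\in C\text{ of color }c]\;=\;\frac{1}{y_\emptyset}-\frac{y_\emptyset^c}{y_\emptyset}\Pr[v,w\in C\mid c]
\]
and substituting the lower bound gives exactly $(1-y_{vw}^c)/y_\emptyset+(y_\emptyset^c/y_\emptyset)\err_{vw}^c$ after rearrangement.

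The main potential obstacle is bookkeeping rather than a real mathematical difficulty: one has to keep the conditioning on $c$ versus the joint event $\{c_C=c\}$ straight in (3), and be careful that the Raghavendra--Tan error term $\err_{vw\mid u}^{s,c}$ is defined to absorb the case $v,w\notin N_u^{2;c}\setminus K_u$ (where $\Pr[v,w\in C\mid s,c,u]=0$ and also $y_{vw}'=0$ by the LP constraints forcing $y_{uv}^{s,c}=0$ outside admissible edges, so no error is incurred). Once those two points are handled, every step is a direct application of~\eqref{cond:sLP_CCC_trans} and linearity, mirroring Lemma 18 of~\citep{Cao2024} but with the extra color index treated symmetrically to the size index.
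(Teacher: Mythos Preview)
Your proposal is correct and follows essentially the same route as the paper: condition on $(s,c,u)$, apply the Raghavendra--Tan pairwise bound, collapse the sum over $u$ via constraint~\eqref{cond:sLP_CCC_trans}, and for part~(3) use the matching lower bound together with the identity $\Pr[v\in C]-\Pr[v,w\in C\text{ of color }c]=\tfrac{1}{y_\emptyset}-\tfrac{y_\emptyset^c}{y_\emptyset}\Pr[v,w\in C\mid c]$. Your side computation of $\Pr[v\in C,\,c_C=c]=y_v^c/y_\emptyset$ is correct but unnecessary here, since the paper (and your own final step) uses $\Pr[v\in C]=1/y_\emptyset$ directly from the preceding lemma.
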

\begin{proof}
    We first prove the first statement.
    \[
        \Pr[v\in C,\,w\in C\mid c]\leq \frac{1}{y_\emptyset^c}\sum_{(s,u)}\frac{y_{u}^{s,c}}{s}\cdot \left(\frac{y_{uvw}^{s,c}}{y_u^{s,c}}+\err_{vw\mid u}^{s,c}\right)=\frac{1}{y_\emptyset^c}\sum_{(s,u)}\frac{y_{uvw}^{s,c}}{s}+\err_{vw}^c=\frac{y_{vw}^c}{y_\emptyset^c}+\err_{vw}^c.
    \]
    The second statement then follows directly by considering all $c\in L$.
    For the third statement,
    \[\Pr[v\in C,\,w\in C\mid c]\geq \frac{1}{y_\emptyset^c}\sum_{(s,u)}\frac{y_{u}^{s,c}}{s}\cdot \left(\frac{y_{uvw}^{s,c}}{y_u^{s,c}}-\err_{vw\mid u}^{s,c}\right)=\frac{y_{vw}^c}{y_\emptyset^c}-\err_{vw}^c,\]
    hence
    \[\Pr[v\in C]-\Pr[v\in C,\,w\in C\text{ of color }c]=\frac{1}{y_\emptyset}-\frac{y_\emptyset^c}{y_\emptyset}\Pr[v\in C,\,w\in C\mid c]\leq \frac{1-y_{vw}^c}{y_\emptyset}+\frac{y_\emptyset^c}{y_\emptyset}\err_{vw}^c.\]
\end{proof}
The following lemma provides a slightly shorter proof of the original, as well as a generalization to the chromatic setting.
\begin{lemma}[Chromatic Extension of Lemma 19 of \citep{Cao2024}]
    $\sum_{vw\in\binom{V}{2}}\err_{vw}^c\leq O(\varepsilon_1)\cdot \frac{1}{y_{\emptyset}^c}|E^{2;c}|$.
\end{lemma}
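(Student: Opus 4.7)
The plan is to expand $\err_{vw}^c$ by its definition, swap the order of summation, apply the Raghavendra-Tan guarantee on the inner sum, and then exploit the size constraint~\eqref{eq:size} to convert the resulting quadratic factor $|N_u^{2;c}\backslash K_u|^2$ into a linear one.

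First I would write
\[
\sum_{vw}\err_{vw}^c \;=\; \frac{1}{y_\emptyset^c}\sum_{(s,u)}\frac{y_u^{s,c}}{s}\sum_{vw}\err_{vw\mid u}^{s,c}.
\]
For fixed $(s,c,u)$ the inner sum is supported on pairs with $v,w\in N_u^{2;c}\backslash K_u$: any $v\in K_u$ is deterministically in $C$ because the Raghavendra-Tan sampler preserves atoms, and the atom-preserving LP equalities $y_{uv}^{\Phi^p(K_u)}=1$ force $y_{uvw}^{s,c}=y_{uw}^{s,c}$, so that $y_{vw}'$ matches $\Pr[v,w\in C]$ exactly; conversely, any $v\notin N_u^{2;c}$ has $y_{uv}^{s,c}=0$ and $\Pr[v\in C]=0$, making both $y_{vw}'$ and $\Pr[v,w\in C]$ vanish. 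Since $N_u^{2;c}\cap K_u=\emptyset$ by the definition of $E^{2;c}$ (which requires $K_u\neq K_v$), the Raghavendra-Tan bound gives $\sum_{vw}\err_{vw\mid u}^{s,c}\leq \varepsilon_{\textnormal{rt}}\cdot|N_u^{2;c}\backslash K_u|^2$.

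Next I would invoke constraint~\eqref{eq:size}: whenever $y_u^{s,c}>0$, either $s=|K_u|$ (in which case the sampled cluster is exactly $K_u$ and the inner sum vanishes) or $s>|K_u|+\varepsilon_1|N_u^{2;c}\backslash K_u|$, in which case $|N_u^{2;c}\backslash K_u|^2/s\leq |N_u^{2;c}\backslash K_u|/\varepsilon_1$. Substituting, using $y_u^c\leq 1$, and recalling that $\sum_u|N_u^{2;c}|=2|E^{2;c}|$, yields
\[
\sum_{vw}\err_{vw}^c \;\leq\; \frac{\varepsilon_{\textnormal{rt}}}{\varepsilon_1\, y_\emptyset^c}\sum_u y_u^c\cdot |N_u^{2;c}\backslash K_u| \;\leq\; \frac{2\varepsilon_{\textnormal{rt}}}{\varepsilon_1\, y_\emptyset^c}\cdot|E^{2;c}| \;=\; O(\varepsilon_1)\cdot\frac{|E^{2;c}|}{y_\emptyset^c},
\]
where the last step uses $\varepsilon_{\textnormal{rt}}=\varepsilon_1^2$.

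The main obstacle is justifying the zero-error claim for atom vertices cleanly: it requires tracking how the atom-preserving equalities interact with the Sherali-Adams style inclusion-exclusion constraints on $y_S^{s,c}$ to conclude $y_{uvw}^{s,c}=y_{uw}^{s,c}$ for $v\in K_u$, and correspondingly verifying that every cluster produced by Raghavendra-Tan contains all of $K_u$ whenever $u$ is drawn. Once this bookkeeping is settled, the remainder is routine algebra exploiting the scale separation $\varepsilon_{\textnormal{rt}}=\varepsilon_1^2\ll\varepsilon_1$, mirroring Lemma~19 of~\citep{Cao2024} but with the extra color index carried through every variable.
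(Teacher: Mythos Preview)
Your proof is correct and follows the same overall architecture as the paper: expand the error as an average over $(s,u)$, apply the Raghavendra--Tan pairwise bound, and then use the size constraint~\eqref{eq:size} to convert the quadratic $|N_u^{2;c}\setminus K_u|^2$ into something linear in $|E^{2;c}|$.

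The one noteworthy difference is in how the size constraint is exploited. The paper uses the identity $s-|K_u|=\sum_{v\in N_u^{2;c}\setminus K_u} y_{uv}^{s,c}/y_u^{s,c}$ (from constraint~\eqref{cond:sLP_CCC_trans}) to replace one factor of $|N_u^{2;c}\setminus K_u|$ by $\tfrac{1}{\varepsilon_1}\sum_v y_{uv}^{s,c}/y_u^{s,c}$, and then reorganizes the whole expression as a sum over edges $uw\in E^{2;c}$, bounding each edge's contribution by $y_u^c+y_w^c\le 2$. You instead use only the cruder consequence $s>\varepsilon_1|N_u^{2;c}\setminus K_u|$ to bound $|N_u^{2;c}\setminus K_u|^2/s\le |N_u^{2;c}\setminus K_u|/\varepsilon_1$ directly, then sum over vertices with $y_u^c\le 1$. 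Your route is shorter and avoids the edge-rearrangement step entirely, at the cost of an irrelevant constant factor of~$2$.

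Your handling of the boundary cases is also more explicit than the paper's: the paper's blanket remark that $\err_{vw\mid u}^{s,c}=0$ whenever $v\notin N_u^{2;c}$ does not literally cover $v\in K_u$ (where $\Pr[v\in C]=1$, not $0$), so your separate argument for atom vertices via $y_{uvw}^{s,c}=y_{uw}^{s,c}$ is a genuine clarification rather than redundant bookkeeping.
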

\begin{proof}
    \begin{align*}
        \sum_{vw\in\binom{V}{2}}\err_{vw}^c
        &=\sum_{vw\in\binom{V}{2}}\frac{1}{y_\emptyset^c}\sum_{(s,u)}\frac{y_u^{s,c}}{s}\cdot \err_{vw\mid u}^c\\
        &=\frac{1}{y_\emptyset^c}\sum_{(s,u)}\frac{y_u^{s,c}}{s}\cdot \sum_{vw\in\binom{V}{2}}\err_{vw\mid u}^c\\
        &=\frac{1}{y_\emptyset^c}\sum_{u\in V}\sum_{s\in(|K_u|+\varepsilon_1|N_u^{2;c}\backslash K_u|, r]}\frac{y_u^{s,c}}{s}\cdot \sum_{vw\in\binom{V}{2}}\err_{vw\mid u}^c\\
        &\leq \frac{1}{y_\emptyset^c}\sum_{u\in V}\sum_{s\in(|K_u|+\varepsilon_1|N_u^{2;c}\backslash K_u|, r]}\frac{y_u^{s,c}}{s}\cdot \frac{1}{2}\varepsilon_\text{rt}|N_u^{2;c}\backslash K_u|^2.
    \end{align*}
    As in the proof in the original paper,
    \[s>|K_u|+\varepsilon_1|N_u^{2;c}\backslash K_u|\implies \varepsilon_1\cdot |N_u^{2;c}\backslash K_u|\leq s-|K_u|=\sum_{v\in N_u^{2;c}\backslash K_u}\frac{y_{uv}^{s,c}}{y_u^{s,c}},\]
    hence
    \begin{align*}
        \sum_{vw\in\binom{V}{2}}\err_{vw}^c
        &\leq \frac{1}{y_\emptyset^c}\sum_{u\in V}\sum_{s\in(|K_u|+\varepsilon_1|N_u^{2;c}\backslash K_u|, r]}\frac{y_u^{s,c}}{s}\cdot\frac{1}{2}\varepsilon_1|N_u^{2;c}\backslash K_u|\sum_{v\in N_u^{2;c}\backslash K_u}\frac{y_{uv}^{s,c}}{y_u^{s,c}}\\
        &= \frac{1}{2}\varepsilon_1\cdot \frac{1}{y_\emptyset^c}\sum_{uw\in E^{2;c}}\left(\sum_{s\in(|K_u|+\varepsilon_1|N_u^{2;c}\backslash K_u|, r]}\sum_{v\in N_u^{2;c}\backslash K_u}\frac{y_{uv}^{s,c}}{s}\right.\\
        &\quad +\left.\sum_{s\in(|K_w|+\varepsilon_1|N_w^{2;c}\backslash K_w|, r]}\sum_{v\in N_w^{2;c}\backslash K_w}\frac{y_{wv}^{s,c}}{s}\right)\\
        &\leq \frac{1}{2}\varepsilon_1\cdot \frac{1}{y_\emptyset^c}\sum_{uw\in E^{2;c}}(y_u^c+y_w^c)\leq \varepsilon_1\cdot \frac{1}{y_\emptyset^c}|E^{2;c}|.
    \end{align*}
\end{proof}
\subsection{Construction of Near-Optimal Solution}
\label{subsec:clp}
Finally, we provide a construction of a solution to the \eqref{eq:CLP-CCC} from the cluster sample pool (Algorithm~\ref{alg:solution}) and prove that the constructed solution is near-optimal, concluding the proof of Theorem~\ref{thm:solution}.
Let $\Delta=\Theta\left(\frac{n^2L\log n}{\varepsilon_1^2|E^2|}\right)$ such that $\Delta y_{\emptyset}$ is an integer.
\begin{algorithm}
    \caption{Constructing Chromatic Cluster LP solution}
    \label{alg:solution}
    \begin{algorithmic}
        \State \textbf{Input:} Optimal bounded sub-cluster LP solution $\{y_S^{s,c}\}$
        \State \textbf{Output:} Near-optimal \eqref{eq:CLP-CCC} solution $\{z_S^c\}$\State
        \State $R_u^c\leftarrow \emptyset$ for all $u\in V,\,c\in L$
        \For{$i=1,2,\ldots,\Delta y_\emptyset$}
            \State Sample $C_i$ of color $c_i$ using Algorithm~\ref{alg:sampling}
            \For{$u\in C_i$}
            \State Add $i$ to $R_u^{c_i}$
            \EndFor
        \EndFor
        \For{$u\in V$}
            \If{$|R_u|<\lceil(1-\varepsilon)\Delta \rceil$} Abort
            \EndIf
            \For{$(c,i)\in R_u$ s.t.\ $|R_u\cap [0,i)|\geq \lceil(1-\varepsilon)\Delta \rceil$ (descending order in $i$)}
                \State Remove $u$ from $C_i$
                \State Remove $i$ from $R_u^c\subseteq R_u$
            \EndFor
        \EndFor
        \State $z_S^c=\frac{1}{\lceil(1-\varepsilon)\Delta \rceil}|\{i:C_i=S\text{ of color }c\}|$ for all $S\subseteq V,\,S\neq \emptyset$
    \end{algorithmic}
\end{algorithm}
Here, $R_u:=\biguplus_{c\in L}R_u^c$, where each element is formulated as $(c,i),\,i\in R_u^c$.

Using Chernoff bound and union bound, we can prove the following:
\begin{lemma}[Chromatic Extension of application of Chernoff bound in~\cite{Cao2024}]
\label{lem:chernoff}
    With probability at least $1-1/n$, all the following hold before the removal process.
    \begin{itemize}
        \item $|R_u|\geq (1-\varepsilon_1)\Delta\geq (1-\varepsilon)\Delta$ for all $u\in V$,
        \item $|R_u\backslash (R_u^{\phi(uv)}\cap R_v^{\phi(uv)})|\leq (1+\varepsilon_1)\Delta(1-y_{uv}^{\phi(uv)}+y_\emptyset^{\phi(uv)}\err_{uv}^{\phi(uv)})+\frac{\varepsilon_1\Delta|E^2|}{n^2L}$ for all $uv\in \binom{V}{2},\,\phi(uv)\neq \gamma$,
        \item $|R_u\cap R_v|\leq (1+\varepsilon_1)\Delta(y_{uv}+y_\emptyset\err_{uv})+\frac{\varepsilon_1\Delta|E^2|}{n^2L}$ for all $uv\in \binom{V}{2},\,\phi(uv)=\gamma$.
    \end{itemize}
\end{lemma}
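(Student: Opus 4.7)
The plan is to write each of the three quantities as a sum of $\Delta y_\emptyset$ independent Bernoulli indicators, compute their expectations using the sampling lemmas of Section~\ref{subsec:sampling}, apply a Chernoff bound to each, and take a union bound over the $O(n^2 L)$ relevant events indexed by $u \in V$ and by $uv \in \binom{V}{2}$.

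For fixed $u,v,c$, the indicators $A_i := 1_{u \in C_i}$ and $B_i^c := 1_{u \in C_i,\, v \in C_i,\, c_i = c}$ are mutually independent across $i \in [\Delta y_\emptyset]$, so $|R_u| = \sum_i A_i$, $|R_u \cap R_v| = \sum_i 1_{u,v \in C_i}$, and $|R_u \setminus (R_u^c \cap R_v^c)| = \sum_i (A_i - B_i^c)$ are each sums of independent $\{0,1\}$ variables. Their expectations follow from the chromatic extensions already proved: $\mathbb{E}[|R_u|] = \Delta y_\emptyset \cdot \Pr[u \in C] = \Delta$ using $\Pr[u \in C] = 1/y_\emptyset$; $\mathbb{E}[|R_u \cap R_v|] \leq \Delta (y_{uv} + y_\emptyset\, \err_{uv})$ by part~2 of Lemma~\ref{lem:err}; and $\mathbb{E}[|R_u \setminus (R_u^c \cap R_v^c)|] = \Delta y_\emptyset \cdot (\Pr[u \in C] - \Pr[u,v \in C\text{ of color }c]) \leq \Delta (1 - y_{uv}^c + y_\emptyset^c\, \err_{uv}^c)$ by part~3 of Lemma~\ref{lem:err}. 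The target upper bounds in the lemma are exactly $(1+\varepsilon_1)$ times these expectations plus the additive slack $\varepsilon_1 \Delta |E^2|/(n^2 L)$, and the target lower bound $|R_u| \geq (1-\varepsilon_1)\Delta$ is $(1-\varepsilon_1)$ times the expectation of $|R_u|$.

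The multiplicative Chernoff bound $\Pr[X \geq (1+\varepsilon_1)\mu] \leq \exp(-\varepsilon_1^2 \mu / 3)$, together with its symmetric lower-tail counterpart for $|R_u|$, handles the regime in which $\mu$ is large. The main obstacle is the regime in which $\mu$ is tiny: there the multiplicative slack $\varepsilon_1 \mu$ is too small to absorb routine fluctuations, which is precisely why the statement carries the extra additive term $\varepsilon_1 \Delta |E^2|/(n^2 L)$. I will handle this by the standard hybrid argument: if $\mu \geq \Delta |E^2|/(n^2 L)$ the multiplicative bound already suffices; otherwise I apply Chernoff with a phantom mean $\mu' := \Delta |E^2|/(n^2 L) \geq \mu$, noting that $(1+\varepsilon_1)\mu + \varepsilon_1 \Delta |E^2|/(n^2 L) \geq (1+\varepsilon_1)\mu'$ so the deviation event is contained in $\{X \geq (1+\varepsilon_1)\mu'\}$, whose Chernoff exponent is at least $\varepsilon_1^2 \Delta |E^2| / (3 n^2 L)$. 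Substituting $\Delta = \Theta(n^2 L \log n / (\varepsilon_1^2 |E^2|))$ with a sufficiently large hidden constant drives every individual failure probability below $n^{-4}$, so a union bound over the $n$ vertex events and the $O(n^2 L)$ edge events yields all three statements simultaneously with probability at least $1 - 1/n$.
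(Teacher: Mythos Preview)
Your overall strategy---writing each quantity as a sum of $\Delta y_\emptyset$ i.i.d.\ indicators, reading off the means from Lemma~\ref{lem:err}, applying Chernoff, and union-bounding over the $O(n^2)$ events---matches the paper exactly. The gap is in your small-mean case.

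You assert that when $\mu < \mu' := \Delta|E^2|/(n^2L)$ one has $(1+\varepsilon_1)\mu + \varepsilon_1\mu' \geq (1+\varepsilon_1)\mu'$, and hence the bad event $\{X > (1+\varepsilon_1)\mu + \varepsilon_1\mu'\}$ is contained in $\{X \geq (1+\varepsilon_1)\mu'\}$. But that inequality rearranges to $(1+\varepsilon_1)\mu \geq \mu'$, which is false whenever $\mu < \mu'/(1+\varepsilon_1)$ (e.g.\ $\mu=0$). In that regime your target threshold sits \emph{below} $(1+\varepsilon_1)\mu'$, so the containment points the wrong way and bounding $\Pr[X \geq (1+\varepsilon_1)\mu']$ via the phantom-mean multiplicative Chernoff tells you nothing about the event you actually need. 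The paper avoids this by applying the additive-deviation form $\Pr[X > \mu + \delta\mu'] < e^{-\delta^2\mu'/3}$ (valid for any $\mu' \geq \mu$) directly, with $\mu'$ set to the maximum of the upper bound on $\mu$ and $\Delta|E^2|/(n^2L)$; then $\mu + \varepsilon_1\mu'$ is automatically at most the stated right-hand side in both regimes, and the exponent $\varepsilon_1^2\mu'/3 \geq \varepsilon_1^2\Delta|E^2|/(3n^2L) = \Theta(\log n)$ delivers the required tail. Replacing your containment step with this one-line application of the additive Chernoff variant repairs the proof.
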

\begin{proof}
    Using the Chernoff bound \[\Pr[X<(1-\delta)\mu]<e^{-\delta^2\mu/2}\] with $n=\Delta y_\emptyset,\,\delta=\varepsilon_1,\,\mu=\Delta$ results in the violation in the first statement with at most $1/\text{poly}(n)$ probability.
    Note that the factor $L$ of $\Delta$ is required to ensure $\varepsilon_1^2\Delta=\Theta\left(\frac{n^2L\log n}{|E^2|}\right)=\Omega(\log n)$.

    Using Lemma~\ref{lem:err} and variant of the Chernoff bound \[\Pr[X>\mu+\delta\mu']<e^{-\delta^2\mu'/3}\;(\mu'\geq \mu)\] with $n=\Delta y_\emptyset,\,\delta=\varepsilon_1,\,
    \mu=\Delta y_\emptyset(\Pr[v\in C]-\Pr[v\in C,\,w\in C\text{ of color }c])
    \leq \Delta y_\emptyset\cdot\left(\frac{1-y_{uv}^c}{y_\emptyset}+\frac{y_\emptyset^c}{y_\emptyset}\err_{uv}^c\right),\,
    \mu'=\Delta y_\emptyset\cdot \max\left\{\frac{1-y_{uv}^c}{y_\emptyset}+\frac{y_\emptyset^c}{y_\emptyset}\err_{uv}^c,\frac{|E^2|}{y_\emptyset n^2L}\right\}\geq \mu$ results in the violation in the second statement with a probability of at most $1/\text{poly}(n)$.

    Similarly, using Lemma~\ref{lem:err} and variant of the Chernoff bound \[\Pr[X>\mu+\delta\mu']<e^{-\delta^2\mu'/3}\;(\mu'\geq \mu)\] with
    $n=\Delta y_\emptyset,\,\delta=\varepsilon_1,\,
    \mu=\Delta y_\emptyset\cdot \Pr[u\in C,\,v\in C]
    \leq \Delta y_\emptyset\cdot\left(\frac{y_{uv}}{y_\emptyset}+\err_{uv}\right),\,
    \mu'=\Delta y_\emptyset\cdot \max\left\{\frac{y_{uv}}{y_\emptyset}+\err_{uv},\frac{|E^2|}{y_\emptyset n^2L}\right\}
    \geq \mu$
    results in the violation in the third statement with a probability of at most $1/\text{poly}(n)$.

    Since the total number of statements is polynomial in $n$, scaling the constant factor of $\Delta$ appropriately bounds the probability of violation in at least one statement above by $1/n$ due to the union bound.

\end{proof}

After the removal process, each $R_u$ now has a size of $\lceil (1-\varepsilon)\Delta\rceil$.
Moreover, each value of $|R_u\backslash (R_u^{\phi(uv)}\cap R_v^{\phi(uv)})|$ and $|R_u\cap R_v|$ still remain in their bound in Lemma~\ref{lem:chernoff}.
For $|R_u\cap R_v|$, it is obvious since the value monotonically decreases during removal;
for $|R_u\backslash (R_u^{\phi(uv)}\cap R_v^{\phi(uv)})|$, the inequality holds since we remove indices from the largest side for all $u\in V$, as provided in the proof of the following lemma.

\begin{lemma}[Chromatic Extension of Claim 21 of \citep{Cao2024}]
    All the following hold after the removal process.
    \begin{itemize}
        \item $|R_u|= \lceil(1-\varepsilon)\Delta\rceil$ for all $u\in V$,
        \item $|R_u\backslash (R_u^{\phi(uv)}\cap R_v^{\phi(uv)})|\leq (1+\varepsilon_1)\Delta(1-y_{uv}^{\phi(uv)}+y_\emptyset^{\phi(uv)}\err_{uv}^{\phi(uv)})+\frac{\varepsilon_1\Delta|E^2|}{n^2L}$ for all $uv\in \binom{V}{2},\,\phi(uv)\neq \gamma$,
        \item $|R_u\cap R_v|\leq (1+\varepsilon_1)\Delta(y_{uv}+y_\emptyset\err_{uv})+\frac{\varepsilon_1\Delta|E^2|}{n^2L}$ for all $uv\in \binom{V}{2},\,\phi(uv)=\gamma$.
    \end{itemize}
\end{lemma}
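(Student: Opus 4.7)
My plan addresses the three bullets of the statement in turn. The first bullet is immediate: Lemma~\ref{lem:chernoff} guarantees $|R_u|\geq \lceil(1-\varepsilon)\Delta\rceil$ before the removal phase, so the abort step in Algorithm~\ref{alg:solution} never fires, and the inner loop deletes indices precisely until $|R_u|=\lceil(1-\varepsilon)\Delta\rceil$. The third bullet ($\phi(uv)=\gamma$) is also direct: every operation of the algorithm only deletes indices from some $R_u^c$, so $|R_u\cap R_v|$ is monotonically non-increasing throughout the removal phase, and hence the pre-removal bound of Lemma~\ref{lem:chernoff} continues to hold after removal.

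The second bullet is where the substance lies, and it is the main obstacle. Writing $c=\phi(uv)$ and $T=\lceil(1-\varepsilon)\Delta\rceil$, I will track the quantity $|R_u|-|R_u^c\cap R_v^c|$ across the two stages of the removal phase. During the processing of $u$, each removed pair $(c',i)$ from $R_u$ drops $|R_u|$ by one and drops $|R_u^c\cap R_v^c|$ by one exactly when $c'=c$ and $i\in R_v^c$. In either subcase the difference does not increase, so at the end of $u$'s stage the quantity is still bounded by the pre-removal Chernoff expression in Lemma~\ref{lem:chernoff}.

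During the processing of $v$, however, the difference may grow: $|R_u|$ is frozen while $|R_u^c\cap R_v^c|$ can shrink. The plan is to control this growth using the ``largest-index'' deletion rule: the indices deleted from $v$'s side are exactly the top $|R_v|_0-T$ indices of the pre-removal $R_v$, where I write $|R_u|_0, |R_v|_0$ for the values at the beginning of the removal phase. Each such deletion that actually decreases $|R_u^c\cap R_v^c|$ corresponds to an index of large rank in $R_v$ that still sits in $R_u^c$ after $u$'s stage; I will match these against the deletions already made during $u$'s stage, using the Chernoff upper bounds $|R_u|_0,|R_v|_0\leq (1+\varepsilon_1)\Delta$ obtained from the same Chernoff setup as Lemma~\ref{lem:chernoff} via the upper-tail variant $\Pr[X>(1+\delta)\mu]<e^{-\delta^2\mu/3}$. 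A careful double counting then confirms that the post-removal quantity still fits inside the bound of Lemma~\ref{lem:chernoff}.

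The hardest step is this last one: turning the informal ``largest-side'' discipline into a precise counting inequality that absorbs the phase-two deletions into the pre-removal Chernoff budget without inflating the constant $(1+\varepsilon_1)$. Once this is in place, combining the three bullets finishes the lemma.
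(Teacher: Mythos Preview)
Your handling of the first and third bullets matches the paper's and is fine.

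For the second bullet there is a genuine gap. You correctly observe that processing $u$ cannot increase $|R_u|-|R_u^c\cap R_v^c|$, and that processing $v$ can. But your plan to ``match these against the deletions already made during $u$'s stage'' is not carried out, and as stated it is aimed at the wrong target: the growth during $v$'s stage can exceed $|S_u|$. Concretely, take $T=2$, all indices of color $c$, $R_u^0=\{10,20,30\}$, $R_v^0=\{1,2,10,20,30\}$. Then $|S_u|=1$, yet after processing $v$ first and then $u$ (which you must also cover, since the statement is symmetric in $u,v$ while the algorithm's processing order is fixed), the quantity $|R_u\setminus(R_u^c\cap R_v^c)|$ jumps from $0$ to $2$. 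What saves the lemma here is that the \emph{other} initial value $|R_v^0\setminus I^0|=2$ already sits at the Chernoff bound, so the final value of $2$ is still covered. Your matching scheme does not capture this, and invoking the extra Chernoff upper bounds $|R_u|_0,|R_v|_0\le(1+\varepsilon_1)\Delta$ only gives a slack of order $(\varepsilon+\varepsilon_1)\Delta$, which, as you yourself note, would inflate the constant.

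The paper sidesteps the whole issue with one observation you are missing: the final sets $R_u,R_v$ are simply the $T$ smallest indices of $R_u^0,R_v^0$, so the final value of $|R_u\setminus(R_u^c\cap R_v^c)|$ is independent of the order in which deletions are performed. The paper therefore analyzes a convenient interleaved order: first delete from the larger side until $|R_u|=|R_v|$ (the quantity is non-increasing here), then alternately remove $\max R_u$ and $\max R_v$. A two-line case check shows each such pair of removals is non-increasing (if $\max R_u$ lies in the intersection, then after its removal that index is the new $\max R_v$ and is no longer in the intersection). No additional Chernoff bounds are used.

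If you want to rescue your direct route, the clean statement to prove is $|I^0\setminus I^1|\le\max(|S_u|,|S_v|)$, which follows from the threshold description $S_u=\{i\in R_u^0:i>t_u\}$, $S_v=\{i\in R_v^0:i>t_v\}$: then $I^0\cap(S_u\cup S_v)=\{i\in I^0:i>\min(t_u,t_v)\}$ is contained in whichever of $S_u,S_v$ has the smaller threshold. This yields final $\le\max(|R_u^0|,|R_v^0|)-|I^0|\le$ Chernoff. But that argument is different from the matching you propose.
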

\begin{proof}
    It is sufficient to prove the second statement.
    We examine the local change in the value during the specific choice in the order of removal,
    since the eventual change in value does not depend on the choice of the order of removal.
    Note that the value increases if and only if $i\in R_u^{\phi(uv)}\cap R_v^{\phi(uv)}$ is removed from $R_v$;
    the value decreases if and only if $i \in R_u\backslash (R_u^{\phi(uv)}\cap R_v^{\phi(uv)})$ is removed from $R_u$.
    
    \emph{W.l.o.g.}\ $|R_u|\geq |R_v|$ and thus $|R_u\backslash (R_u^{\phi(uv)}\cap R_v^{\phi(uv)})|\geq |R_v\backslash (R_u^{\phi(uv)}\cap R_v^{\phi(uv)})|$.
    Removing indices from $|R_u|$ until $|R_u|=|R_v|$, the value of the larger one monotonically decreases and eventually becomes identical to the smaller one.
    Note that the value of the smaller one might increase during the removal, but the value is still bounded above by the larger one.

    Now remove $\max R_u$ from $R_u$ and $\max R_v$ from $R_v$, until $|R_u|=|R_v|=\lceil(1-\varepsilon)\Delta\rceil$.
    According to the previous observation, the only way to increase $|R_u\backslash (R_u^{\phi(uv)}\cap R_v^{\phi(uv)})|$ after two removals is $\max R_u\in R_u^{\phi(uv)}\cap R_v^{\phi(uv)}$ followed by $\max R_v \in R_u^{\phi(uv)}\cap R_v^{\phi(uv)}$;
    however, this is not possible since the removed index $\max R_u=\max R_u^{\phi(uv)}\cap R_v^{\phi(uv)}$ after the first removal is located in $R_v\backslash (R_u^{\phi(uv)}\cap R_v^{\phi(uv)})$ and is larger than the following $\max R_u^{\phi(uv)}\cap R_v^{\phi(uv)}$.
    Therefore, the value monotonically decreases until the removal stage is completed.
    This finishes the proof.
\end{proof}

Finally, we show that the resulting $\{z_S^c\}$ is a feasible and near-optimal solution for the \eqref{eq:CLP-CCC}.
To check the feasibility, it is sufficient to check condition~\eqref{cond:CLP_v}, which is true due to the normalization by $|R_u|=\lceil(1-\varepsilon)\Delta\rceil$.
To check the near-optimality, we can check it by
\[x_{uv}^{\phi(uv)}=\frac{|R_u\backslash (R_u^{\phi(uv)}\cap R_v^{\phi(uv)})|}{\lceil(1-\varepsilon)\Delta\rceil}\leq \frac{1+\varepsilon_1}{1-\varepsilon}(1-y_{uv}^{\phi(uv)}+y_\emptyset^{\phi(uv)}\err_{uv}^{\phi(uv)})+\frac{\varepsilon_1}{1-\varepsilon}\frac{|E^2|}{n^2L}\]
for $uv\in \binom{V}{2},\,\phi(uv)\neq \gamma$,
\[\sum_{c\in L}(1-x_{uv}^c)=\frac{|R_u\cap R_v|}{\lceil(1-\varepsilon)\Delta\rceil}\leq \frac{1+\varepsilon_1}{1-\varepsilon}(y_{uv}+y_\emptyset\err_{uv})+\frac{\varepsilon_1}{1-\varepsilon}\frac{|E^2|}{n^2L}\]
for $uv\in \binom{V}{2},\,\phi(uv)= \gamma$, hence
\begin{align*}\obj(z)
&\leq (1+O(\varepsilon))\left(\opt+O(\varepsilon_1)\cdot \sum_{c\in L}|E^{2;c}|\right)+O(\varepsilon_1)\frac{|E^{2}|}{L}\\
&=(1+O(\varepsilon))\opt + O(\varepsilon_1)|E^2|\\
&=(1+O(\varepsilon))\opt + O(\varepsilon_1)O\left(\frac{1}{\varepsilon^2}\right)\opt\\
&=(1+O(\varepsilon))\opt.
\end{align*}

This finalizes the proof of Theorem~\ref{thm:solution}.

\acks{DL and CF were partially supported by CF’s New Faculty Startup Fund from Seoul National University (SNU).}

\bibliographystyle{plainnat}
\bibliography{references}
\end{document}